\newcommand{\N}{\mathbb N}
\newcommand{\F}{\mathbb F}
\DeclareMathOperator{\Tr}{Tr}
\DeclareMathOperator{\im}{im}
\newtheorem{theorem}{Theorem}
\newtheorem{definition}{Definition}
\newtheorem{lemma}{Lemma}
\newtheorem{corollary}{Corollary}
\newtheorem{proposition}{Proposition}
\newtheorem{remark}{Remark}
\newtheorem{conjecture}{Conjecture}
\DeclareMathOperator{\Ker}{ker}
\begin{document}
\title{On CCZ-equivalence of the inverse function}
	\author{{Lukas K\"olsch}
	\thanks{\noindent Lukas K\"olsch is with Department of Mathematics, University of Rostock, Germany.
		\textbf{Email}: lukas.koelsch@uni-rostock.de
	}
}
\maketitle
\begin{abstract}
The inverse function $x \mapsto x^{-1}$ on $\F_{2^n}$ is one of the most studied functions in cryptography due to its widespread use as an S-box in block ciphers like AES. In this paper, we show that, if $n\geq 5$, every function that is CCZ-equivalent to the inverse function is already EA-equivalent to it. This confirms a conjecture by Budaghyan, Calderini and Villa. We also prove that every permutation that is CCZ-equivalent to the inverse function is already affine equivalent to it. The majority of the paper is devoted to proving that there is no permutation polynomial of the form $L_1(x^{-1})+L_2(x)$ over $\F_{2^n}$ if $n\geq 5$, where $L_1,L_2$ are nonzero linear functions. In the proof, we combine Kloosterman sums, quadratic forms and tools from additive combinatorics.
\end{abstract}
\begin{IEEEkeywords}
	Inverse function, CCZ-equivalence, EA-equivalence, S-boxes, permutation polynomials.
\end{IEEEkeywords}

\section{Introduction}

Vectorial Boolean functions play a big role in the design of symmetric cryptosystems as building blocks of block ciphers~\cite{carletbuch}. To resist differential attacks, a vectorial Boolean function should have low differential uniformity.
\begin{definition}
	A function $F \colon \F_{2^n} \rightarrow \F_{2^n}$ has differential uniformity $d$, if 
	\begin{equation*}
		d=\max_{a\in \F_{2^n}^*,b\in \F_{2^n}} |\{x \colon F(x)+F(x+a)=b\}|.
	\end{equation*}
	A function with differential uniformity $2$ is called almost perfect nonlinear (APN) on $\F_{2^n}$.
\end{definition}
As the differential uniformity is always even, APN functions yield the best resistance against differential attacks. 

The differential uniformity and in particular the APN property is preserved by certain transformations, which define equivalence relations on the set of vectorial Boolean functions. The most widely used notions of equivalence are \emph{affine equivalence}, \emph{extended affine equivalence} and \emph{CCZ-equivalence}. 

\begin{definition} \label{def:equiv}

	Two functions $F_1,F_2 \colon \F_{2^n}\rightarrow \F_{2^n}$ are called \emph{extended affine equivalent (EA-equivalent)} if there are affine permutations $A_1,A_2$ and an affine mapping $A_3$ from $\F_{2^n}$ to itself such that 
\begin{equation}
A_1(F_1(A_2(x)))+A_3(x)=F_2(x).
\label{eq:ea}
\end{equation}
$F_1$ and $F_2$ are called \emph{affine equivalent} if they are EA-equivalent and it is possible to choose $A_3=0$ in Eq.~\eqref{eq:ea}. 

	Moreover, $F_1$ and $F_2$ are called \emph{CCZ-equivalent} if there is an affine, bijective mapping $A$ on $\F_{2^n} \times \F_{2^n}$ that maps the graph of $F_1$, denoted by $G_{F_1} = \{(x,F_1(x)) \colon x \in \F_{2^n}\}$, to the graph of $F_2$.
\end{definition}

It is obvious that two functions that are affine equivalent are also EA-equivalent. Furthermore, two EA-equivalent functions are also CCZ-equivalent. In general, the concepts of CCZ-equivalence and EA equivalence do differ, for example a bijective function is always CCZ-equivalent to its compositional inverse, which is not the case for EA-equivalence. Note also that the size of the image set is invariant under affine equivalence, which is generally not the case for the other two more general notions. 

Particularly well studied are APN monomials, a list of all known infinite families is given in Table~\ref{t:APN}. The table is generally believed to be complete. We want to note that, while the inverse function is not APN in even dimension, its differential uniformity is in this case $4$, which is the lowest known differential uniformity for a bijection known as of writing this paper in all even dimensions larger than $6$.

\begin{table}[ht]
\captionsetup{font=scriptsize}
	\centering
	\begin{tabular}{ ||c|c|c|c|c|| } 
	 \hline
		& Exponent & Conditions & \\
	 \hline  \hline
	 Gold & $2^r+1$ & $\gcd(r,n)=1$   & \cite{TIT:Gold68}\\ 
	 \hline
	 Kasami & $2^{2r}-2^r+1$ & $\gcd(r,n)=1$ &\cite{IC:Kasami} \\ 
	 \hline
	 Welch & $2^t+3$ &$n=2t+1$ & \cite{TIT:Dobbertin99} \\ 
	 \hline
	 Niho & $2^t-2^{\frac{t}{2}}-1$ & $n=2t+1$, $t$ even &  \cite{IC:Dobbertin99}\\ 
	 & $2^t-2^{\frac{3t+1}{2}}-1$ & $n=2t+1$, $t$ odd  & \\ 
	 \hline
	 Inverse & $2^n-2$ & $n$ odd &   \cite{EC:Nyb}\\ 
	 \hline
	 Dobbertin & $2^{4r}+2^{3r}+2^{2r}+2^r-1$ & $n=5r$  &   \cite{FFA:Dobbertin}\\ [1ex] 
	 \hline
	\end{tabular}
	\caption{List of known APN exponents over $\F_{2^n}$ up to CCZ-equivalence. }
	\label{t:APN}
\end{table}

Especially the concept of CCZ-equivalence has been extensively studied in the past years since it is a powerful tool for constructing and 
studying cryptological functions.

Let us denote by the \emph{EA-class} of $F$ the set of all functions EA-equivalent to a vectorial Boolean function $F$, and similarly by \emph{CCZ-class} of $F$ the set of all functions CCZ-equivalent to $F$. Since EA-equivalence is a special case of CCZ-equivalence, we can partition a CCZ-class into EA-classes. Experimental results show that for many functions $F \colon \F_{2^n} \rightarrow \F_{2^n}$ the CCZ-class of $F$ coincides with its EA-class if $F$ is not a permutation. If $F$ is a permutation, then its CCZ class often consists of precisely $2$ EA-classes, with $F$ and $F^{-1}$ being representatives for the two EA-classes. Of course, it is possible to have more than two EA-classes in one CCZ-class, this is for example the case for Gold functions in odd dimension \cite[Theorem 1]{eaccz}. Additionally, it is possible that a permutation and its inverse are in the same EA-class, this happens for example naturally for involutions like the inverse function $x \mapsto x^{2^n-2}$ over $\F_{2^n}$.

Since CCZ-equivalence is a much more difficult concept than EA-equivalence, it is desirable to understand when the CCZ-class of $F$ can be described entirely by combining EA-equivalence and the inverse transformation (in the case that $F$ is invertible). This is a problem that is open even for most of the APN monomials in Table~\ref{t:APN}. As noted earlier, the CCZ-class of Gold functions in odd dimension contains more than $2$ EA-classes and thus cannot be described with EA-equivalence and the inverse transformation alone. For all other monomials, this question is still open. Budaghyan, Calderini and Villa conjectured the following based on a computer search for small values of $n$:
 
\begin{conjecture}[{\cite[Conjecture 4.14]{conj}}] \label{conj:bud}
	Let $F(x)=x^d$ be a non-Gold APN monomial or the inverse function over $\F_{2^n}$. Then, every function that is CCZ-equivalent to $F$ is EA-equivalent to $F$ or to $F^{-1}$ (if it exists).
\end{conjecture}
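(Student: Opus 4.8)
The plan is to prove the inverse-function case of Conjecture~\ref{conj:bud}. Since $\iota(x) := x^{2^n-2}$ is an involution we have $\iota = \iota^{-1}$, so it suffices to show that for $n \geq 5$ every $F'$ CCZ-equivalent to $\iota$ is EA-equivalent to $\iota$ (and, if $F'$ is a permutation, even affine equivalent to $\iota$). First I would write the affine bijection $\mathcal{A}$ of $\F_{2^n}\times\F_{2^n}$ carrying $G_\iota$ to $G_{F'}$ in block form, $\mathcal{A}(x,y) = (L_1(x)+L_2(y)+c_1,\ L_3(x)+L_4(y)+c_2)$ with $L_1,\dots,L_4$ linear. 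Then $\mathcal{A}(G_\iota)$ is the graph of a function precisely when $\phi(x):=L_1(x)+L_2(x^{-1})+c_1$ permutes $\F_{2^n}$, and in that case $F'\circ\phi(x)=L_3(x)+L_4(x^{-1})+c_2$. So the whole problem reduces to classifying the pairs $(L_1,L_2)$ of linear maps for which $L_1(x)+L_2(x^{-1})$ is a permutation polynomial of $\F_{2^n}$.

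The engine of the argument is the claim I would isolate as the main lemma: \emph{for $n\geq 5$, $L_1(x)+L_2(x^{-1})$ permutes $\F_{2^n}$ only in the two trivial cases $L_2=0$ with $L_1$ a linear bijection, or $L_1=0$ with $L_2$ a linear bijection.} Granting this, the argument closes quickly. If $L_2=0$ then $\mathcal{A}$ is block lower-triangular, so bijectivity of $\mathcal{A}$ forces both $L_1$ and $L_4$ bijective; substituting $x = L_1^{-1}(z-c_1)$ gives $F'(z)=L_4\bigl((L_1^{-1}(z-c_1))^{-1}\bigr)+(\text{affine in }z)$, which is EA-equivalent to $\iota$ via the affine permutations $A_1=L_4$ and $A_2\colon z\mapsto L_1^{-1}(z-c_1)$. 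The case $L_1=0$ is symmetric ($L_2,L_3$ bijective) and again gives EA-equivalence to $\iota$. For the permutation refinement: if such an $F'$ is moreover a permutation, conjugating the EA-equivalence by the relevant affine permutations reduces $F'$, up to affine equivalence, to a permutation of the shape $x^{-1}+L(x)+c$; this has the form $L_1(x^{-1})+L_2(x)$ with $L_1=\Id\neq 0$, so the main lemma forces $L=0$, i.e. only a constant has been added, whence $F'$ is affine equivalent to $\iota$.

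The substance is the main lemma, which I would attack through character sums. Writing $\tilde L_i$ for the adjoint of $L_i$ with respect to the trace form, one has for $b\neq 0$ that $\sum_{x}(-1)^{\Tr(b(L_1(x)+L_2(x^{-1})))}=\sum_{x}(-1)^{\Tr(\tilde L_1(b)x+\tilde L_2(b)x^{-1})}$; after a scaling substitution this is the binary Kloosterman sum $\mathcal{K}(\tilde L_1(b)\tilde L_2(b))$ when $\tilde L_1(b),\tilde L_2(b)$ are both nonzero, is $0$ when exactly one of them vanishes, and equals $2^n$ when both vanish. Hence $L_1(x)+L_2(x^{-1})$ is a permutation iff $\ker\tilde L_1\cap\ker\tilde L_2=\{0\}$ and the quadratic map $Q(b):=\tilde L_1(b)\tilde L_2(b)$ satisfies $\im Q\subseteq Z\cup\{0\}$, where $Z=\{a\in\F_{2^n}^*\colon \mathcal{K}(a)=0\}$. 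Now view $Q$ as a quadratic form on the $\F_2$-space $\F_{2^n}$: its polar form $\beta_Q(x,y)=\tilde L_1(x)\tilde L_2(y)+\tilde L_1(y)\tilde L_2(x)$ has radical contained in $\ker\tilde L_1\cap\ker\tilde L_2=\{0\}$, except when $\tilde L_2$ is an $\F_{2^n}$-scalar multiple of $\tilde L_1$; in that exceptional case $Q=\lambda\tilde L_1^2$, the condition $\ker\tilde L_1\cap\ker\tilde L_2=\{0\}$ forces $\tilde L_1$ bijective, and then $\im Q=\F_{2^n}\not\subseteq Z\cup\{0\}$, a contradiction. So $\beta_Q$ is a nondegenerate alternating form. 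For $n$ odd this is already impossible; so only even $n$ survive, and then one must exclude that the image of a quadratic map with nondegenerate polar form lies inside $Z\cup\{0\}$. Nondegeneracy of $\beta_Q$ forces $\im Q+\im Q=\F_{2^n}$ (for any $s\neq 0$ the set $\{Q(b)+Q(b+s)\}$ is a full coset of $\F_{2^n}$), giving $|Z|+1\geq|\im Q|\geq 2^{n/2}$; since $|Z|$ is itself of order $2^{n/2}$ (Weil bound together with Lachaud--Wolfmann-type results), this does not yet suffice, and one has to feed in the finer arithmetic of $Z$ — its precise cardinality, divisibility and equidistribution — alongside more refined structural information on $\im Q$ extracted from a case analysis over the possible ranks and kernel configurations of $L_1,L_2$.

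I expect the main obstacle to be exactly this last step for even $n$: because $|\im Q|$ and $|Z|$ have the same order of magnitude, no crude counting inequality separates them, so the argument turns into a delicate multi-case analysis blending the distribution of Kloosterman zeros with quadratic-form structure, and the genuinely hard point is to push it down to the sharp threshold $n\geq 5$ rather than stopping at "$n$ large enough," with the small dimensions needing their own (computational) treatment.
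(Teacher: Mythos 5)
Your high-level reduction is exactly the paper's: the inverse function is an involution, CCZ-equivalence is encoded by the block form of the affine bijection on $\F_{2^n}\times\F_{2^n}$, and everything collapses to the non-existence of permutations $L_1(x^{-1})+L_2(x)$ with $L_1,L_2\neq 0$ (this is Proposition~\ref{prop:start}, which the paper imports); your Kloosterman-sum criterion is Proposition~\ref{prop:inverse}. The problem is that the whole content of the paper is the proof of that main lemma (Theorem~\ref{thm:noperm}), and your argument for it does not close. Two intermediate claims are false as stated. First, a nondegenerate alternating $\F_2$-bilinear map $\beta_Q\colon \F_{2^n}\times\F_{2^n}\to\F_{2^n}$ valued in the \emph{extension} field does not force $n$ even; the parity obstruction holds only for forms valued in the ground field. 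For instance $(x,y)\mapsto x^2y^4+x^4y^2$ is alternating and nondegenerate on the $3$-dimensional $\F_2$-space $\F_8$. (Tellingly, the paper's computation rules out even $n$ first, via Eq.~\eqref{eq:n-1} against Eq.~\eqref{eq:ind}, and has to work hardest for odd $n$ --- the opposite of what your parity step predicts.) Second, for fixed $s\neq 0$ the set $\{Q(b)+Q(b+s)\}=\{\beta_Q(b,s)+Q(s)\}$ is a coset of the image of the linear map $b\mapsto\beta_Q(b,s)$, which nondegeneracy only guarantees to be nonzero, not all of $\F_{2^n}$; so $\im Q+\im Q=\F_{2^n}$ does not follow. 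Most importantly, you yourself concede that the resulting bound $|\im Q|\gtrsim 2^{n/2}\approx |Z|$ is inconclusive and that the finish requires a ``delicate multi-case analysis'' you do not supply. That unfinished step is where the theorem lives, so the proposal proves only the (already known, imported) reduction.

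The paper's way around this impasse is worth noting, because it sidesteps the set $Z$ of exact Kloosterman zeros entirely: it uses only the necessary condition $K_n(a)\equiv 0\pmod{16}$, which by Theorem~\ref{thm:faruk} is equivalent to the two polynomial identities $\Tr(a)=0$ and $Q(a)=0$ for an explicit $\F_2$-valued quadratic form $Q$. Substituting $a=L_1^*(x)L_2^*(x)$ and polarizing these identities along $\ker L_1^*$ yields a product-set identity $\sqrt{\ker L_1\cdot\ker L_1}=\ker L_1$, whence by the doubling-constant-one theorem from additive combinatorics (Theorem~\ref{thm:doubling}) the kernels are multiplicative translates of subfields; a hyperplane-covering argument (Lemma~\ref{lem:hyper} with Proposition~\ref{prop:hyper}) then forces both kernels to be one-dimensional, and an explicit coefficient computation produces the final contradiction. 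None of these three steps appears in your sketch; to make your approach work you would have to either supply the missing analysis of $\im Q$ versus $Z$ or switch to a divisibility-based relaxation of the Kloosterman-zero condition as the paper does.
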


In this paper we are going to confirm this conjecture in the case of the inverse function, i.e. we show that the CCZ-class of the function $x \mapsto x^{2^n-2}$ on $\F_{2^n}$ coincides with its EA-class. This is to our knowledge the first theoretical result of this kind. Note that, in many ways, the inverse function is actually the most interesting case because of its widespread use in cryptography, most famously as the S-box in AES.\\

In~\cite{yoshiara,dempwolff} it was shown that two power functions $x \mapsto x^k$ and $x \mapsto x^l$ on $\F_{2^n}$ are CCZ-equivalent if and only if $k \equiv 2^il \pmod{2^n-1}$ or $kl \equiv 2^i \pmod{2^n-1}$ for some $i \in \N$. Our result can then be seen as a generalization of this result for the specific case $k=2^n-2$ since the only power functions that are EA-equivalent to the inverse function are the power functions with $l \equiv 2^i(2^n-2) \pmod{2^n-1}$. Our approach in this paper is very different from the one used in~\cite{yoshiara,dempwolff}, which relies on sophisticated  tools exploiting the structure of the automorphism groups of power functions. It seems difficult to use these group theoretical tools to settle the complete relationship of EA- and CCZ-equivalence between (APN) power functions and arbitrary functions because the different power functions display different behavior. Indeed, as mentioned above, there exist functions that are CCZ-equivalent to an APN Gold function but not EA-equivalent to any power function. Computer results in low dimensions show that for most other power functions, such functions do not exist (and a proof of this nonexistence in the case of the inverse function is of course the subject of this paper). The general approach we present in this paper (see Proposition~\ref{prop:start}) is quite flexible and may be used also for different functions, even non-power functions. However, the precise proof  depends a lot on the specific function, and our proof is to a large degree tailored specifically towards the inverse function. \\

Because of the importance of permutations for the design of block ciphers, it is also interesting to search for permutations inside the CCZ-class of a function $F$. Indeed, a way to find permutations with good cryptographic properties is to look for a permutation in the CCZ-class of a non-permutation with good cryptographic properties, since CCZ-equivalence preserves many important properties. This is precisely the technique that was used to find the only known APN permutation in even dimension \cite{dillonperm}. Thus it is a very interesting question to classify all permutations that are in the CCZ-class of an APN function. Treating this problem for infinite families is however very difficult. To the authors knowledge, the only result in this direction is found in a recent paper~\cite{goldkasami}, where it was shown that there is no permutation in the CCZ-class of the APN Gold functions over $\F_{2^n}$ with $n$ even and that there is no permutation in the CCZ-class of APN Kasami functions over $\F_{2^n}$ if $n$ is divisible by $4$. The proof technique used in~\cite{goldkasami} relies on a careful analysis of the bent component functions of the Gold and Kasami functions. 

In this paper, we will also classify all permutations in the CCZ-class of the inverse function (both in odd and even dimension). We show that (excluding some sporadic cases in low dimension) the only permutations in the CCZ-class are the ``trivial'' ones, i.e. the ones that are affine equivalent to the inverse function. 
Of course, since the inverse function does not have any bent components, the approach is necessarily different from the approach in~\cite{goldkasami}. Instead, we are going to use the following proposition which connects both of the problems we mentioned to a special type of permutation polynomial. The proof of the proposition can be found in~\cite{waifipaper}, co-written by the author of this paper, that deals with the same questions as this paper but only achieved partial results. 

\begin{proposition}[\cite{waifipaper}] \label{prop:start} 
\begin{description}
	\item[(a)] Let $F \colon \F_{2^n} \rightarrow \F_{2^n}$ and assume no permutation of the form $F(x)+L(x)$ exists where $L$ is a non-zero linear function. Then every permutation that is EA-equivalent to $F$ is already affine equivalent to it. 
		\item[(b)] Let $F \colon \F_{2^n} \rightarrow \F_{2^n}$ and assume no permutation of 
		the form $L_1(F(x))+L_2(x)$ exists where $L_1,L_2$ are non-zero linear functions. 
		Then every function that is CCZ-equivalent to $F$ is EA-equivalent to 
		$F$ or $F^{-1}$ (if it exists). Moreover, all permutations that are CCZ-equivalent to $F$ are affine equivalent to $F$ or $F^{-1}$.
	\end{description}
	
\end{proposition}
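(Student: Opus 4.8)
The plan is to prove both parts by elementary bookkeeping with affine maps, invoking the non-existence hypotheses at the decisive moment; no deep input is needed, and I would dispose of (a) first, since the hypothesis of (b) turns out to subsume it.

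For \textbf{(a)}, suppose $G$ is a permutation with $G = A_1\circ F\circ A_2 + A_3$ for affine permutations $A_1,A_2$ and an affine map $A_3$, and let $L_1$ be the linear part of $A_1$. I would pass to the conjugate $\widetilde{G} := A_1^{-1}\circ G\circ A_2^{-1}$; a one-line computation in characteristic two (using $A_1^{-1}(t) = L_1^{-1}(t) + L_1^{-1}(c_1)$, where $c_1$ is the translation part of $A_1$) shows $\widetilde{G}(y) = F(y) + M(y)$ with $M := L_1^{-1}\circ A_3\circ A_2^{-1}$ affine, say $M = L + c$ with $L$ linear. Since $G$ is a permutation, so is $\widetilde{G}$, hence the translate $F + L = \widetilde{G} + c$ is a permutation; by hypothesis this forces $L = 0$. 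Thus $G(x) = A_1(F(A_2(x)) + c)$, and absorbing $c$ into the outer map — $z\mapsto A_1(z+c)$ is again an affine permutation — exhibits $G$ as affine equivalent to $F$.

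For \textbf{(b)}, let $G$ be CCZ-equivalent to $F$ via an affine bijection $\mathcal{A}$ of $\F_{2^n}\times\F_{2^n}$, and write its linear part through linear maps $L_{11},L_{12},L_{21},L_{22}$ so that $\mathcal{A}(x,F(x)) = (F_1(x), F_2(x))$ with $F_1(x) = L_{11}(x) + L_{12}(F(x)) + c_1$ and $F_2(x) = L_{21}(x) + L_{22}(F(x)) + c_2$. Because $\mathcal{A}(G_F)$ must again be a functional graph, $F_1$ is a bijection and $G = F_2\circ F_1^{-1}$. Then $L_{11}(x) + L_{12}(F(x))$ is a permutation polynomial, so the hypothesis of (b) rules out having both $L_{11}\neq 0$ and $L_{12}\neq 0$, while they cannot both vanish either (otherwise $F_1$ is constant). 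This leaves two cases. If $L_{12} = 0$, then $F_1$ is an affine permutation, so $L_{11}$ is invertible, and bijectivity of $\mathcal{A}$ forces $L_{22}$ invertible as well; substituting the affine permutation $F_1^{-1}$ into $F_2$ then exhibits $G$ as EA-equivalent to $F$. If instead $L_{11} = 0$ and $L_{12}\neq 0$, then $F_1 = L_{12}\circ F + c_1$ being bijective forces both $L_{12}$ and $F$ to be permutations, bijectivity of $\mathcal{A}$ forces $L_{21}$ invertible, and a short computation of $F\circ F_1^{-1}$ gives $G(y) = L_{21}(F^{-1}(u(y))) + L_{22}(u(y)) + c_2$ for an affine permutation $u$, so $G$ is EA-equivalent to $F^{-1}$.

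For the ``moreover'' I would first observe that the hypothesis of (b) implies that of (a) both for $F$ (take the outer linear map to be $\Id$) and for $F^{-1}$ when it exists: a permutation of the form $F^{-1}(x) + L(x)$ with $L\neq 0$ would, after substituting $x = F(y)$, yield the permutation $y + L(F(y))$, contradicting (b). Hence a permutation $G$ in the CCZ-class of $F$ is, by the previous paragraph, EA-equivalent to $F$ or to $F^{-1}$, and applying part (a) to $F$ respectively $F^{-1}$ upgrades this to affine equivalence. I do not expect a genuine obstacle in any of this; the only delicate points are keeping the translation parts separate from the linear parts throughout — this is where the distinction between EA- and affine equivalence actually lives — and the two brief arguments that bijectivity of $\mathcal{A}$ forces the blocks $L_{22}$, respectively $L_{21}$, to be invertible.
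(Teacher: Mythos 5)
The paper does not contain its own proof of this proposition --- it defers entirely to the cited reference \cite{waifipaper} --- so there is no internal argument to compare against; your proof is correct and complete, and it is the standard argument for such statements: conjugating away the outer affine map in (a), and decomposing the linear part of $\mathcal{A}$ into blocks $L_{11},L_{12},L_{21},L_{22}$ in (b) so that bijectivity of the first coordinate function forces one off-diagonal block to vanish. The two points you flag as delicate (invertibility of $L_{22}$, resp.\ $L_{21}$, from bijectivity of $\mathcal{A}$) do go through exactly as you indicate, by exhibiting a nonzero kernel element of the linear part otherwise.
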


Based on this proposition, we are going to study permutation polynomials of the form $F(x) = L_1(x^{2^n-2})+L_2(x)$ where $L_1,L_2$ are linear functions. For simplicity, we will use for the rest of this paper the usual convention $0^{-1}= 0$ which allows us to denote the inverse function by $x \mapsto x^{-1}$.

\section{Permutation polynomials of the form  $L_1(x^{-1})+L_2(x)$}

\subsection{Preliminaries and preparation}

Let us first introduce some basic concepts and notation. 

We denote by $\Tr$ the absolute trace function mapping $\F_{2^n}$ to $\F_2$. The value of $n$ is here always taken from the context. We then define the hyperplanes of $\F_{2^n}$ as $H_a=\{x \in \F_{2^n} \colon \Tr(ax)=0\}$ for  $a \in \F_{2^n}^*$. 

For a set $A \subseteq \F_{2^n}$, we denote by $1/A$ the set of all inverses of $A$, i.e. $1/A=\{1/a \colon a \in A\setminus\{0\}\}$. Further, we define the product set $A\cdot A = \{a_1a_2 \mid a_1,a_2 \in A\}$ and $\sqrt{A}=\{\sqrt{a} \mid a \in A\}$. Note that since we are working in fields of characteristic $2$, the function $x \mapsto x^2$ is bijective, so $|\sqrt{A}|=|A|$.

For a linear mapping $L$, we denote by $L^*$ its adjoint mapping with respect to the bilinear form 
\begin{equation*}
\langle x,y \rangle=\Tr(xy),
\end{equation*}
i.e. we have 
\begin{equation*}
\Tr(L(x)y)=\Tr(xL^*(y))
\end{equation*}
for all $x,y \in \F_{2^n}$. For a given linear mapping $L=\sum_{i=0}^{n-1}c_ix^{2^i}$ we can explicitly describe its adjoint mapping by $L^*=\sum_{i=0}^{n-1}c_{i}^{2^{n-i}}x^{2^{n-i}}$. The following lemma is well-known, for a proof see e.g.~\cite{waifipaper}. 

\begin{lemma} \label{lem:adjoint}
	Let $L \colon \F_{2^n} \rightarrow \F_{2^n}$ be a linear mapping and $L^*$ its adjoint mapping. Then $\dim(\im(L^*))=\dim(\im(L))$ and $\dim(\ker(L^*))=\dim(\ker(L))$.
\end{lemma}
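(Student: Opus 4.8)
The plan is to exploit the nondegeneracy of the trace bilinear form $\langle x,y \rangle = \Tr(xy)$ on $\F_{2^n}$, regarded as an $n$-dimensional vector space over $\F_2$. First I would record the basic fact about orthogonal complements: for a subspace $U \subseteq \F_{2^n}$, writing $U^\perp = \{y \in \F_{2^n} : \Tr(uy) = 0 \text{ for all } u \in U\}$, one has $\dim(U) + \dim(U^\perp) = n$. This is exactly the statement that the trace form is nondegenerate, i.e. for every nonzero $x$ there is some $y$ with $\Tr(xy) \neq 0$, which is a standard fact about finite fields.

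The key step is the identity $\ker(L^*) = \im(L)^\perp$. This drops out of the defining relation $\Tr(L(x)y) = \Tr(xL^*(y))$: a vector $y$ lies in $\ker(L^*)$ exactly when $L^*(y) = 0$, which by nondegeneracy (in the $x$ variable) is equivalent to $\Tr(xL^*(y)) = 0$ for all $x$, hence to $\Tr(L(x)y) = 0$ for all $x$, i.e. to $y \in \im(L)^\perp$. Combining this with the dimension formula above gives $\dim(\ker(L^*)) = n - \dim(\im(L))$.

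Everything else is rank--nullity. Applying rank--nullity to $L^*$ yields $\dim(\im(L^*)) = n - \dim(\ker(L^*)) = \dim(\im(L))$, the first claimed equality; applying rank--nullity to $L$ then gives $\dim(\ker(L)) = n - \dim(\im(L)) = \dim(\ker(L^*))$, the second. (Alternatively, one can check $L^{**} = L$ directly from the explicit description $L^* = \sum_{i} c_i^{2^{n-i}} x^{2^{n-i}}$ and obtain the kernel statement by symmetry, but the rank--nullity route is shorter.) I do not expect any genuine obstacle here: the only point needing care is the appeal to nondegeneracy of the trace form, and the rest is elementary linear algebra over $\F_2$.
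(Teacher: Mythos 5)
Your proposal is correct and complete: the identity $\ker(L^*)=\im(L)^\perp$ follows exactly as you say from the defining relation $\Tr(L(x)y)=\Tr(xL^*(y))$ together with nondegeneracy of the trace form, and the two claimed equalities then drop out of $\dim(U)+\dim(U^\perp)=n$ and rank--nullity. The paper itself gives no proof of this lemma, deferring to a reference, but the argument you outline is the standard one and there is nothing to object to.
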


\begin{definition}
	For $a \in \F_{2^n}$, the Kloosterman sum of $a$ over $\F_{2^n}$ is defined as
	\begin{equation*}
		K_n(a)=\sum_{x \in \F_{2^n}}(-1)^{\Tr(x^{-1}+ax)}.
	\end{equation*}
\end{definition}
 
Using Kloosterman sums and the adjoint polynomial, we can give a criterion when $L_1(x^{-1})+L_2(x)$  is a permutation.
\begin{proposition}[\cite{waifipaper}] \label{prop:inverse}
	Let $L_1,L_2$ be linear mappings over $\F_{2^n}$. Then $L_1(x^{-1})+L_2(x)$ is a permutation of $\F_{2^n}$ if and only if
		\begin{equation*}
		K_n(L_1^*(b)L_2^*(b)) =0
	\end{equation*}
	for all $b \in \F_{2^n}$ and $\ker(L_1^*)\cap \ker(L_2^*)=\{0\}$.
\end{proposition}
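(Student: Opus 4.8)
The plan is to invoke the standard character-sum criterion for permutations: a map $F \colon \F_{2^n} \to \F_{2^n}$ is a permutation if and only if $\sum_{x \in \F_{2^n}} (-1)^{\Tr(bF(x))} = 0$ for every $b \in \F_{2^n}^*$. Writing $F(x) = L_1(x^{-1}) + L_2(x)$ and using the defining property of the adjoint, namely $\Tr(bL_1(x^{-1})) = \Tr(x^{-1}L_1^*(b))$ and $\Tr(bL_2(x)) = \Tr(xL_2^*(b))$, the relevant sum for a fixed $b$ becomes
\begin{equation*}
S(b) := \sum_{x \in \F_{2^n}} (-1)^{\Tr(cx^{-1} + dx)}, \qquad c := L_1^*(b),\quad d := L_2^*(b),
\end{equation*}
with the convention $0^{-1}=0$ in force. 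So everything reduces to deciding precisely when $S(b) = 0$ for all $b \in \F_{2^n}^*$.

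First I would single out the degenerate case $c = d = 0$, which is exactly the condition $b \in \ker(L_1^*) \cap \ker(L_2^*)$; here $S(b) = 2^n \neq 0$, which forces $\ker(L_1^*) \cap \ker(L_2^*) = \{0\}$ as a necessary condition. For every remaining $b$ I claim $S(b) = K_n(cd)$. If $c \neq 0$, the change of variables $x = cy$ is a bijection of $\F_{2^n}$ fixing $0$ that turns $\Tr(cx^{-1} + dx)$ into $\Tr(y^{-1} + (cd)y)$ — the convention $0^{-1}=0$ guarantees the $x=0$ term is matched correctly — so $S(b) = K_n(cd)$. If $c = 0$ and $d \neq 0$, then $S(b) = \sum_{x}(-1)^{\Tr(dx)} = 0$, and this agrees with $K_n(0)$ once one records the elementary identity $K_n(0) = 1 + \sum_{y \in \F_{2^n}^*}(-1)^{\Tr(y)} = 0$. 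Hence $S(b) = K_n(L_1^*(b)L_2^*(b))$ whenever $(c,d) \neq (0,0)$, and this also holds vacuously at $b = 0$.

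Assembling the pieces: $F$ is a permutation iff $S(b) = 0$ for all $b \in \F_{2^n}^*$, which by the case analysis happens iff no nonzero $b$ lies in $\ker(L_1^*) \cap \ker(L_2^*)$ and $K_n(L_1^*(b)L_2^*(b)) = 0$ for all $b \in \F_{2^n}$ (the value at $b=0$ being automatically $0$). That is exactly the stated equivalence. I do not expect a genuine obstacle here, since the proposition is essentially a repackaging of the character-sum criterion; the only point that demands care is the bookkeeping around $0^{-1}=0$ — one must check that the change of variables and the degenerate subcases all reproduce the $x=0$ contribution consistently, and that $K_n$ is understood to include the $x=0$ term.
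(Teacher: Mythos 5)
Your argument is correct: the character-sum criterion, the adjoint identity, the substitution $x=cy$, and the verification that $K_n(0)=0$ together give exactly the stated equivalence, and you handle the $0^{-1}=0$ bookkeeping and the degenerate case $L_1^*(b)=L_2^*(b)=0$ properly. The paper itself does not prove this proposition (it defers to the cited reference \cite{waifipaper}), but your proof is the standard one for statements of this type and is what that reference does.
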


Proposition~\ref{prop:inverse} motivates us to investigate elements $a \in \F_{2^n}^*$ with the property that $K_n(a)=0$. We call such elements \emph{Kloosterman zeros}. 
Kloosterman zeros are used  for the construction of bent and hyperbent functions (see for example  \cite{dillonthesis,hyperbent}). Few results about the distribution of Kloosterman zeros are known. It is known that Kloosterman zeros exist for all values of $n$ (note that this is not true in characteristic $ \geq 5$  \cite{finns}). There is a way to compute the number of Kloosterman zeros \cite{goppa}, which relies on determining the class number of binary quadratic forms. However, it is difficult to use this method to derive a theoretical result on the number and distribution of Kloosterman sums. Moreover, it is known that for $n>4$, Kloosterman zeros are never contained in proper subfields of $\F_{2^n}$ \cite{zeros}.

Instead of working directly with Kloosterman sums, we are going to use dyadic approximations, which give necessary conditions for elements to be a Kloosterman zero.

Let $Q \colon \F_{2^n} \rightarrow \F_2$ be the quadratic form defined by 
\begin{equation*}
Q(x)=\sum_{0 \leq i <j <n}x^{2^i+2^j}
\end{equation*}

 for all $x \in \F_{2^n}$. 

\begin{theorem}[\cite{faruk}] \label{thm:faruk}
	Let $n\geq 4$. Then $K_{n}(a) \equiv 0 \pmod {16}$ if and only if $\Tr(a)=0$ and $Q(a)=0$.
\end{theorem}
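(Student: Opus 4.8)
The plan is to pass from the Kloosterman sum to Gauss sums and then extract the answer from $2$-adic valuations via Stickelberger's theorem. Write $q=2^n$, $\psi(x)=(-1)^{\Tr(x)}$, and $g(\chi)=\sum_{x\in\F_q^{\ast}}\psi(x)\chi(x)$ for the Gauss sum of a multiplicative character $\chi$ of $\F_q^{\ast}$. Expanding $x\mapsto\psi(x^{-1})$ into multiplicative characters and using $g(\chi_0)=-1$ for the trivial character $\chi_0$ gives the identity $(q-1)K_n(a)=q+\sum_{\chi\neq\chi_0}\overline{\chi}(a)g(\chi)^2$. I would then fix a prime $\mathfrak{P}$ of $\Z[\zeta_{q-1}]$ over $2$; since $2\nmid q-1$, this prime is unramified and $v_{\mathfrak{P}}(2)=1$, so for the rational integer $K_n(a)$ one has $16\mid K_n(a)$ if and only if $(q-1)K_n(a)\equiv 0\pmod{\mathfrak{P}^4}$, and the term $q$ vanishes modulo $\mathfrak{P}^4$ as soon as $n\geq4$ because $v_{\mathfrak{P}}(q)=n$.

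By Stickelberger's theorem (for $p=2$, with this normalisation), $v_{\mathfrak{P}}\!\bigl(g(\omega^{-t})\bigr)=s_2(t)$, where $\omega$ is the Teichm\"uller character and $s_2(t)$ is the binary digit sum of $t$. Hence $v_{\mathfrak{P}}(g(\chi)^2)=2s_2(t)\geq4$ for every nontrivial $\chi=\omega^{-t}$ with $s_2(t)\geq2$, so modulo $\mathfrak{P}^4$ only the $n$ (pairwise distinct, since $2$ has order $n$ modulo $q-1$) characters $\omega^{-2^i}$ contribute. Frobenius invariance of Gauss sums gives $g(\omega^{-2^i})=g(\omega^{-1})$, and $\overline{\omega^{-2^i}}(a)=\omega(a)^{2^i}$, so the identity collapses to
\begin{equation*}
(q-1)\,K_n(a)\equiv g(\omega^{-1})^2\,\tau(a)\pmod{\mathfrak{P}^4},\qquad \tau(a):=\sum_{i=0}^{n-1}\omega(a)^{2^i}.
\end{equation*}
Since $v_{\mathfrak{P}}\!\bigl(g(\omega^{-1})^2\bigr)=2$ and $q-1$ is a $\mathfrak{P}$-unit, the theorem reduces to the claim that $v_{\mathfrak{P}}(\tau(a))\geq2$ if and only if $\Tr(a)=0$ and $Q(a)=0$.

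The core of the argument is therefore a second-order $2$-adic expansion of the ``Teichm\"uller trace'' $\tau(a)$, which I would carry out in the Witt ring $W(\F_{2^n})$ (the ring of integers of the unramified degree-$n$ extension of $\Z_2$). The Witt addition law together with the identity $V([z])=2[\sqrt{z}]$ for the Verschiebung $V$ of a Teichm\"uller representative yields, by induction on $m$ whose inductive step is exactly the recursion $e_2(c_0,\dots,c_m)=e_2(c_0,\dots,c_{m-1})+e_1(c_0,\dots,c_{m-1})c_m$ for the second elementary symmetric polynomial, the congruence
\begin{equation*}
\sum_{i=0}^{m}[c_i]\equiv\Bigl[\sum_i c_i\Bigr]+2\Bigl[\sqrt{\sum_{i<j}c_i c_j}\,\Bigr]\pmod{4}.
\end{equation*}
Taking $c_i=a^{2^i}$ and noting $\sum_i a^{2^i}=\Tr(a)$ and $\sum_{i<j}a^{2^i}a^{2^j}=Q(a)$ (both in $\F_2$, which $\omega$ fixes) gives $\tau(a)\equiv\Tr(a)+2\,Q(a)\pmod{4}$, so $v_{\mathfrak{P}}(\tau(a))\geq2$ exactly when $\Tr(a)=Q(a)=0$, finishing the proof. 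The main obstacle is the $2$-adic bookkeeping: invoking Stickelberger with the correct normalisation, confirming that for $n\geq4$ precisely the characters $\omega^{-2^i}$ survive modulo $\mathfrak{P}^4$, and---most delicately---running the Witt-vector computation so that the two correction terms $2[\sqrt{E_1 c_m}]+2[\sqrt{E_2}]$ collapse, via additivity of the square root in characteristic $2$, into $2[\sqrt{E_1 c_m+E_2}]$ modulo $4$; it is from this collapse that $Q(a)$---the second elementary symmetric function of the Frobenius conjugates of $a$, i.e.\ the third coefficient of the characteristic polynomial of $a$ over $\F_2$---emerges as the obstruction at the second level.
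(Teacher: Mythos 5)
The paper offers no proof of this statement: Theorem~\ref{thm:faruk} is imported verbatim from the cited reference and used only as a black box (via Corollary~\ref{cor:faruk}), so there is no internal argument to compare yours against. Your plan is, as far as I can check, a correct reconstruction of how such dyadic divisibility results for binary Kloosterman sums are actually proved: the identity $(q-1)K_n(a)=q+\sum_{\chi\neq\chi_0}\overline{\chi}(a)g(\chi)^2$ is right (using $g(\chi_0)=-1$ and the inclusion of the $x=0$ term in $K_n$), Stickelberger for $p=2$ does give $v_{\mathfrak{P}}(g(\omega^{-t}))=s_2(t)$ with no ramification issues since $\zeta_2=-1$, so for $n\geq 4$ only the Frobenius orbit $\{\omega^{-2^i}\}$ survives modulo $\mathfrak{P}^4$, and the Witt-vector computation $\sum_i[a^{2^i}]\equiv \Tr(a)+2Q(a)\pmod 4$ is exactly the second-order expansion needed; I verified the base case $[u]+[v]\equiv[u+v]+2[\sqrt{uv}]\pmod 4$ and the inductive collapse via the recursion for $e_2$, and both are sound. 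Two small points to tidy up when writing this out in full: the multiplicative character expansion requires $a\neq 0$, so the (trivially consistent) case $K_n(0)=0$ with $\Tr(0)=Q(0)=0$ should be dispatched separately; and you should record explicitly that $\Tr(a)$ and $Q(a)$ lie in $\F_2$ (being the first and second elementary symmetric functions of the conjugates of $a$, hence Frobenius-invariant), so that their Teichm\"uller lifts are the rational integers $0$ or $1$ and the congruence $\tau(a)\equiv\Tr(a)+2Q(a)\pmod 4$ really does decide whether $v_{\mathfrak{P}}(\tau(a))\geq 2$.
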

Applied to Proposition~\ref{prop:inverse} this theorem gives a necessary condition for $L_1(x^{-1})+L_2(x)$ to be a permutation. 
\begin{corollary}\label{cor:faruk}
		If $L_1(x^{-1})+L_2(x)$ is a permutation of $\F_{2^n}$ with $n\geq 4$, then $\Tr(L_1^*(a)L_2^*(a))=Q(L_1^*(a)L_2^*(a))=0$ for all $a \in \F_{2^n}$ and $\Ker(L_1^*)\cap \Ker(L_2^*)=\{0\}$. 
\end{corollary}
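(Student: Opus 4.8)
The plan is simply to chain together the two results that have just been stated in the excerpt. First I would invoke Proposition~\ref{prop:inverse}: under the assumption that $L_1(x^{-1})+L_2(x)$ is a permutation of $\F_{2^n}$, it immediately yields both $\Ker(L_1^*)\cap\Ker(L_2^*)=\{0\}$ --- which is verbatim one of the two conclusions of the corollary, so nothing further is required there --- and the vanishing $K_n(L_1^*(a)L_2^*(a))=0$ for every $a\in\F_{2^n}$ (the variable renaming from $b$ to $a$ being cosmetic).

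Next, for a fixed $a\in\F_{2^n}$, I would note that the exact identity $K_n(L_1^*(a)L_2^*(a))=0$ trivially implies the congruence $K_n(L_1^*(a)L_2^*(a))\equiv 0\pmod{16}$, and this is precisely the hypothesis of Theorem~\ref{thm:faruk}, which is applicable since $n\geq 4$. That theorem then returns $\Tr(L_1^*(a)L_2^*(a))=0$ and $Q(L_1^*(a)L_2^*(a))=0$. Since $a$ was arbitrary, these equalities hold for all $a\in\F_{2^n}$, which is the remaining conclusion of the corollary.

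I expect no genuine obstacle here: the corollary is a direct formal consequence of Proposition~\ref{prop:inverse} and Theorem~\ref{thm:faruk}, both of which I am treating as black boxes. The only point worth a moment's thought is the degenerate case in which $L_1^*(a)L_2^*(a)=0$ for some $a$; but then the two quantities to be shown to vanish are $\Tr(0)=0$ and $Q(0)=0$, so Theorem~\ref{thm:faruk} still applies without any modification. All the real content of the argument sits inside the two cited statements.
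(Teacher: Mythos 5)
Your proposal is correct and matches the paper's (implicit, one-line) argument exactly: the corollary is obtained by feeding the conclusion $K_n(L_1^*(b)L_2^*(b))=0$ of Proposition~\ref{prop:inverse} into Theorem~\ref{thm:faruk}, the kernel condition being carried over verbatim. Nothing further is needed.
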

 Let $B(x,y)=Q(x)+Q(y)+Q(x+y)$ be the bilinear form associated to $Q$. 

We can determine $B$ explicitly with a simple calculation.
	We have 
	\begin{align*}
			B(x,y)=&\sum_{0 \leq i <j <n}x^{2^i+2^j}+\sum_{0 \leq i <j <n}y^{2^i+2^j} \\
			&+\sum_{0 \leq i <j <n}(x+y)^{2^i+2^j} \\
			=&\sum_{i\neq j}x^{2^i}y^{2^j}=\sum_ix^{2^i}\sum_{j \neq i }y^{2^j}\\
			=&\sum_ix^{2^i}(\Tr(y)+y^{2^i}) \\
			=&\sum_i(xy)^{2^i}+\Tr(y)\sum_ix^{2^i}\\
			=&\Tr(xy)+\Tr(x)\Tr(y).
	\end{align*}

	A partial result in the classification of permutations of the form $L_1(x^{-1})+L_2(x)$ was found in~\cite{eainverse}. We want to remark that the techniques we develop in this paper to tackle the more general question differ considerably from the ones employed in~\cite{eainverse}.
\begin{theorem}[\cite{eainverse}] \label{thm:chin}
	Let $F\colon \F_{2^n} \rightarrow \F_{2^n}$ defined by  $F(x)= x^{-1}+L(x)$ with some nonzero linear mapping $L(x)$. If $n\geq 5$ then $F$ is not a permutation.
\end{theorem}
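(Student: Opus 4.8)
The plan is to run $F(x)=x^{-1}+L(x)$ through the Kloosterman criterion. Writing $F(x)=L_1(x^{-1})+L_2(x)$ with $L_1=\Id$, $L_2=L$, the condition $\Ker(L_1^*)\cap\Ker(L_2^*)=\{0\}$ of Proposition~\ref{prop:inverse} is automatic because $L_1^*=\Id$, so $F$ is a permutation if and only if
\begin{equation*}
K_n\big(b\,M(b)\big)=0\quad\text{for all }b\in\F_{2^n},\qquad M:=L^*,
\end{equation*}
and $M$ is nonzero by Lemma~\ref{lem:adjoint}. Assuming $n\geq5$ and that such an $M$ exists, I aim to reach $M=0$. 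Corollary~\ref{cor:faruk} (valid for $n\geq4$) turns the display into the two polynomial identities $\Tr(b\,M(b))=0$ and $Q(b\,M(b))=0$, holding for all $b\in\F_{2^n}$.

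First I would extract the structural content of the trace identity. Its polarisation is $(x,y)\mapsto\Tr\big(x(M+M^*)(y)\big)$, which must vanish identically, so $M=M^*$; writing $M=\sum_{i=0}^{n-1}c_ix^{2^i}$, self-adjointness means $c_i=c_{n-i}^{2^i}$ with indices mod $n$. Substituting this back, the terms of $\Tr(bM(b))=\sum_i\Tr(c_ib^{1+2^i})$ with $1\leq i\leq n-1$ cancel in the conjugate pairs $\{i,\,n-i\}$, and for even $n$ the leftover self-paired term $\Tr(c_{n/2}b^{1+2^{n/2}})$ vanishes since $c_{n/2}\in\F_{2^{n/2}}$, $b^{1+2^{n/2}}\in\F_{2^{n/2}}$, and $\Tr$ is identically zero on that subfield. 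Hence $\Tr(bM(b))=\Tr(c_0b^2)$, a linear function of $b$, which forces $c_0=0$: $M$ is self-adjoint with vanishing $x^{2^0}$-coefficient.

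The real content is the identity $Q(bM(b))=0$. Since each value $bM(b)$ satisfies $K_n(bM(b))=0$, and — as recalled above — no proper subfield of $\F_{2^n}$ contains a nonzero element with this property once $n>4$, the first thing I would do is dispose of the degenerate case where $b\mapsto bM(b)$ takes all of its values in a proper subfield: there the image is $\{0\}$, so $M=0$. In the remaining case I would expand $Q(bM(b))$ explicitly, writing $N(b):=bM(b)$ as the sum of the monomials $u_i=c_ib^{1+2^i}$, grouping them into conjugate pairs $\{u_i,u_i^{2^{n-i}}\}$, and using $Q(z^{2^k})=Q(z)$ together with $B(x,y)=\Tr(xy)+\Tr(x)\Tr(y)$. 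This rewrites $Q(N(b))=0$ as an explicit $\F_2$-valued polynomial identity in $b$ whose coefficients are polynomial expressions in the $c_i$; reducing exponents modulo $2^n-1$ and comparing coefficients over cyclotomic cosets produces a rigid system of relations, and I would argue — splitting on the parity of $n$ and on the smallest index $i$ with $c_i\neq0$ — that for $n\geq5$ this system forces $c_1=\dots=c_{n-1}=0$, i.e.\ $M=0$, contradicting $L\neq0$. That $n\geq5$ is needed is consistent with the genuine low-dimensional exception $x^{-1}+x+x^2$, which equals $x$ on $\F_4$.

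The main obstacle is this last step: $N(b)=bM(b)$ is a \emph{quadratic} rather than linear function of $b$, so $Q\circ N$ has algebraic degree up to $4$, and the reduction $b^{2^n}=b$ amalgamates contributions of several different weights; keeping track of which monomials collide, and ruling out every near-degenerate configuration of the $c_i$ (those for which $N$ nearly collapses into a subfield or into a single Gold-type form $\Tr(c\,b^{1+2^i})$), is where the real effort lies. As an alternative one could avoid Corollary~\ref{cor:faruk} and solve $F(x)=F(y)$ directly, obtaining that $F$ is a permutation if and only if $\Tr\!\big(1/(sL(s))\big)=1$ for every $s\in\F_{2^n}^*$ with $L(s)\neq0$ and $sL(s)\neq1$ for all $s\in\F_{2^n}^*$ — i.e.\ the image of $s\mapsto sL(s)$, after inversion, avoids $\{x:\Tr(x)=0\}$; but turning this into the bound $n\geq5$ again seems to need quadratic-form information of exactly the kind packaged in Corollary~\ref{cor:faruk}, so I would still build the argument around the two identities above.
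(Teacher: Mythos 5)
First, a point of reference: the paper does not prove Theorem~\ref{thm:chin} at all. It is imported verbatim from~\cite{eainverse} and used as a black box (via Corollary~\ref{cor:chin-gen}) precisely to dispose of the case where $L_1$ or $L_2$ is bijective, which the paper's own machinery does not handle; the author explicitly notes that the techniques of~\cite{eainverse} ``differ considerably'' from those developed here. So your attempt has to stand on its own, and it does not, because the decisive step is missing.

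What you do carry out is correct: the reduction via Proposition~\ref{prop:inverse} to $K_n(bM(b))=0$ with $M=L^*$, the passage to the necessary conditions $\Tr(bM(b))=0$ and $Q(bM(b))=0$ via Corollary~\ref{cor:faruk}, and the analysis of the trace identity (polarisation gives $M=M^*$, the conjugate pairs $\{i,n-i\}$ cancel, the self-paired term for even $n$ dies inside $\F_{2^{n/2}}$, hence $c_0=0$). But from that point on the proposal is a plan, not a proof: ``I would argue \dots\ that for $n\geq 5$ this system forces $c_1=\dots=c_{n-1}=0$'' \emph{is} the theorem, and you yourself label it ``the main obstacle'' without resolving it. Expanding $Q(bM(b))$ for a general self-adjoint $M$ yields, after reduction modulo $2^n-1$, a coefficient system over cyclotomic cosets of weight up to $4$, and nothing in the proposal shows this system admits only the trivial solution. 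That is not a formality: the conditions of Corollary~\ref{cor:faruk} are only mod-$16$ approximations of $K_n=0$, and in the paper's own general argument they are not used alone but are supplemented by the extra identity~\eqref{eq:1} coming from a nontrivial element of $\ker(L_1^*)$ --- an identity that degenerates to $0=0$ when $L_1=\Id$. So it is not even established that the two identities you retain are strong enough to force $M=0$; verifying this (or finding the additional input needed) is exactly the content of the missing step. The subfield/degenerate-image observation and the alternative criterion $\Tr(1/(sL(s)))=1$ are both correct but do not close this gap. As written, the proposal proves only that $L^*$ is self-adjoint with vanishing identity coefficient, which is far short of the statement.
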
 
From this theorem we can derive a straightforward corollary.
\begin{corollary}\label{cor:chin-gen}
	Let $n\geq 5$ and $F\colon \F_{2^n} \rightarrow \F_{2^n}$ be defined by  $F(x) = L_1(x^{-1})+L_2(x)$,
	where $L_1, L_2$ are non-zero linear mappings. If $L_1$ or $L_2$ is bijective, then $F$ is not a permutation on $\F_{2^n}$.
\end{corollary}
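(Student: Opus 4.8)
The plan is to reduce both cases to Theorem~\ref{thm:chin} by composing $F$ with a suitable bijection, using that pre- or post-composition with a bijection does not affect whether $F$ is a permutation.

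First I would treat the case where $L_1$ is bijective. Then $F$ is a permutation if and only if $L_1^{-1}\circ F$ is a permutation, and by linearity $L_1^{-1}(F(x))=L_1^{-1}\big(L_1(x^{-1})+L_2(x)\big)=x^{-1}+L(x)$ with $L=L_1^{-1}\circ L_2$. Since $L_2$ is nonzero and $L_1^{-1}$ is injective, $L$ is a nonzero linear mapping, so Theorem~\ref{thm:chin} shows that $x^{-1}+L(x)$ is not a permutation for $n\geq 5$, and hence neither is $F$.

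Next I would treat the case where $L_2$ is bijective. Here $F$ is a permutation if and only if $L_2^{-1}\circ F$ is, and $L_2^{-1}(F(x))=M(x^{-1})+x$ with $M=L_2^{-1}\circ L_1$ a nonzero linear mapping. This is not quite the normal form of Theorem~\ref{thm:chin} because the roles of $x$ and $x^{-1}$ are swapped, so I would additionally use that the inverse map $\iota\colon x\mapsto x^{-1}$ (with the convention $0^{-1}=0$) is an involution and therefore a bijection of $\F_{2^n}$. Then $L_2^{-1}\circ F$ is a permutation if and only if $(L_2^{-1}\circ F)\circ\iota$ is, and $(L_2^{-1}\circ F)(x^{-1})=M\big((x^{-1})^{-1}\big)+x^{-1}=x^{-1}+M(x)$, which again is not a permutation for $n\geq 5$ by Theorem~\ref{thm:chin}; hence $F$ is not a permutation.

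I do not expect a genuine obstacle: the only point needing a moment's care is the $L_2$-bijective case, where one has to pass through the involution $x\mapsto x^{-1}$ to bring $M(x^{-1})+x$ into the exact shape $x^{-1}+M(x)$ covered by Theorem~\ref{thm:chin}. It is worth noting explicitly that the hypothesis is used only to obtain invertibility of one of $L_1,L_2$; the substantive difficulty — the case where neither $L_1$ nor $L_2$ is bijective — is exactly what the remainder of the paper addresses and is not required for this corollary.
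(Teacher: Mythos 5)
Your proposal is correct and follows essentially the same route as the paper: reduce to Theorem~\ref{thm:chin} by composing with the inverse of the bijective linear map, and handle the $L_2$ case by precomposing with the involution $x\mapsto x^{-1}$ to swap the roles of $L_1$ and $L_2$ (the paper swaps first and then applies the $L_1$ case, but this is the same argument in a different order). No gaps.
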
 
	\begin{proof}
	Assume $F$ is a permutation and $L_1$ is bijective. Then $L_1^{-1} \circ F = x^{-1}+L_1^{-1}(L_2(x))$ is a permutation, contradicting Theorem~\ref{thm:chin}. 
	
	Since $F$ is a permutation if and only if $F(x^{-1})=L_1(x)+L_2(x^{-1})$ is a permutation, the same holds also for $L_2$.
	\end{proof}

\subsection{The proof of the non-existence of permutations of the form $L_1(x^{-1})+L_2(x)$ }

We now prove that no permutations of the form $L_1(x^{-1})+L_2(x)$ with $L_1,L_2\neq0$ exist if $n \geq 5$. The conditions given in Corollary~\ref{cor:faruk} are our starting point. Our proof consists of three steps: 
\begin{itemize}
	\item We show that if $L_1(x^{-1})+L_2(x)$ permutes $\F_{2^n}$, then the kernels of $L_1$ and $L_2$ must be translates of a subfield of $\F_{2^n}$, i.e. of the form $a \F_{2^k}$ with $a \in \F_{2^n}^*$ and $k|n$. 
	\item Under this condition, we show $k=1$, i.e. that $\dim(\ker (L_1))=\dim(\ker (L_2))=1$.
	\item We show explicitly that there is no permutation with $\dim(\ker (L_1))=\dim(\ker (L_2))=1$.
\end{itemize}

The key step here is the first one. Generally, the difficulty of the problem lies in the fact that the inverse function does not preserve the additive structure given by the linear mappings. However, if $\ker(L)=a\F_{2^k}$, then $1/\ker(L) \cup \{0\}=\frac{1}{a}\F_{2^k}$, so the kernel retains its structure after inversion, which is the key for the next steps. \\

We will use the following result from additive combinatorics which characterizes the subsets in an Abelian group with doubling constant $1$.
\begin{theorem}[{\cite[Proposition 2.7]{tao_vu}}] \label{thm:doubling}
	Let $(G,\cdot)$ be an Abelian group and $A \subseteq G$ a finite subset of $G$. Then $|A\cdot A|=|A|$ if and only if $A=gH$, where $g \in G$ and $H \leq G$ is a subgroup of $G$.
\end{theorem}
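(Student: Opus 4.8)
\emph{Proof proposal.} The plan is to establish the two implications separately; both are elementary and need nothing beyond cancellation in a group plus the standard fact that a nonempty finite sub-semigroup of a group is a subgroup.

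For the direction $A=gH\Rightarrow |A\cdot A|=|A|$: if $A=gH$ with $H\leq G$, then, using commutativity, $A\cdot A=g^2HH=g^2H$, and since left translation by $g^2$ is a bijection of $G$ we get $|A\cdot A|=|H|=|gH|=|A|$.

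For the converse the key move is a normalisation that puts the identity into $A$. Pick any $a\in A$ and replace $A$ by $a^{-1}A$. This changes neither $|A|$ nor $|A\cdot A|$, because $(a^{-1}A)\cdot(a^{-1}A)=a^{-2}(A\cdot A)$ (here commutativity is used) and translation by $a^{-1}$ resp. $a^{-2}$ is injective. Moreover $a^{-1}A\ni a^{-1}a=e$. Hence it suffices to prove: if $e\in A$ and $|A\cdot A|=|A|$, then $A$ is a subgroup $H$; undoing the normalisation then gives $A=aH$, a coset as required.

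With $e\in A$ in hand the remaining steps are immediate. First, $A=\{e\}\cdot A\subseteq A\cdot A$, and since $|A\cdot A|=|A|<\infty$ this forces $A\cdot A=A$, i.e. $A$ is closed under the group operation. Second, for fixed $a\in A$ the map $x\mapsto ax$ sends the finite set $A$ injectively (by cancellation in $G$) into $A$, hence bijectively onto $A$; surjectivity yields a $y\in A$ with $ay=e$, so $a^{-1}\in A$. Thus $A$ is closed under products and inverses and contains $e$, i.e. $A$ is a subgroup, completing the argument. I do not expect a real obstacle here: the only points needing a little care are invoking commutativity precisely where the reduction to $e\in A$ requires it, and using finiteness exactly at the step where ``closed under multiplication'' is upgraded to ``subgroup''.
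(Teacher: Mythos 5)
Your proof is correct. Note that the paper does not actually prove this statement --- it is imported verbatim as \cite[Proposition 2.7]{tao_vu} --- so there is no in-paper argument to compare against; your write-up supplies the standard elementary proof of that cited result. Both directions are sound: the normalisation $A\mapsto a^{-1}A$ legitimately preserves $|A|$ and $|A\cdot A|$ (commutativity is used exactly where you say it is), the finiteness of $A$ correctly upgrades $A\subseteq A\cdot A$ to $A\cdot A=A$ and injectivity of $x\mapsto ax$ to surjectivity, and the coset is recovered by undoing the translation. The only pedantic caveat is that the statement as written admits $A=\emptyset$ as a vacuous counterexample to the ``only if'' direction; one should read ``finite subset'' as ``finite nonempty subset'' (which your proof implicitly does by picking $a\in A$, and which is all that is needed in the paper's application, where $A=\ker(L_1)\setminus\{0\}$ is nonempty).
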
 

We are now ready to prove the first step we outlined above.
\begin{theorem} \label{thm:subfield}
		Let $F \colon \F_{2^n}\rightarrow \F_{2^n}$ with $n \geq 5$ be defined by $F(x)=L_1(x^{-1})+L_2(x)$, where $L_1,L_2$ are nonzero linear mappings over $\F_{2^n}$. If $F$ is a permutation, then $\ker (L_1)$ and $\ker (L_2)$ are translates of subfields of $\F_{2^n}$, i.e. they are of the form $a\F_{2^k}$ for $a\neq 0$ and $k|n$. Moreover, we have $\ker (L_1)=L_2^*(\ker(L_1^*))$ and $\ker (L_2)=L_1^*(\ker(L_2^*))$.
\end{theorem}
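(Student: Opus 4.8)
The starting point is Corollary~\ref{cor:faruk}: if $F$ is a permutation, then $\Tr(L_1^*(a)L_2^*(a))=0$ and $Q(L_1^*(a)L_2^*(a))=0$ for all $a\in\F_{2^n}$, together with $\ker(L_1^*)\cap\ker(L_2^*)=\{0\}$. Write $M_i=L_i^*$ for brevity. The first condition $\Tr(M_1(a)M_2(a))=0$ for all $a$ is equivalent, via the adjoint relation $\Tr(M_1(a)M_2(a))=\Tr(a\,(M_1^*\!\circ M_2)(a))$, to saying that the linear map $a\mapsto M_1^*(M_2(a))$ (equivalently $L_1\circ L_2^*$) is \emph{skew}, i.e. its associated bilinear form vanishes on the diagonal; equivalently $L_1 L_2^* = (L_1 L_2^*)^* = L_2 L_1^*$ as linear maps. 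I would first exploit this symmetry to pin down the relationship between the images and kernels of $L_1,L_2,L_1^*,L_2^*$, and in particular establish the claimed identities $\ker(L_1)=L_2^*(\ker(L_1^*))$ and $\ker(L_2)=L_1^*(\ker(L_2^*))$: from $\Tr(M_1(a)M_2(a))=0$ one gets $\im(M_1)\subseteq (M_2(\ker M_1))^{\perp}$-type containments, and a dimension count using Lemma~\ref{lem:adjoint} (so $\dim\ker L_i=\dim\ker L_i^*=:d_i$) should force equality. The condition $\ker(L_1^*)\cap\ker(L_2^*)=\{0\}$ is what makes these maps behave well: $L_2^*$ restricted to $\ker(L_1^*)$ is injective, so $\dim L_2^*(\ker L_1^*)=d_1$, matching $\dim\ker L_1$.

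The heart of the argument is the quadratic condition $Q(M_1(a)M_2(a))=0$ for all $a$, which I would convert into a statement about the product set $K_1\cdot K_2$ where $K_i=\ker(L_i)$ (or its inverse-compatible image). The bilinear form of $Q$ was computed in the excerpt to be $B(x,y)=\Tr(xy)+\Tr(x)\Tr(y)$. Using this together with the trace condition, and specializing $a$ to run over cosets/preimages cleverly, the vanishing of $Q$ on all products $M_1(a)M_2(a)$ should translate into: the set $S$ of products of the form (element of $\im M_1$) $\times$ (corresponding element of $\im M_2$), or after the right change of variables the product set $\frac{1}{K_1}\cdot\frac{1}{K_2}$, is forced to be small — specifically, one shows that $Q$ being additive-compatible (a "linear" quadratic form, i.e. $B\equiv 0$) on a subset forces that subset to be contained in an affine subspace, and combined with the kernel identities this pins $|K_1\cdot K_2|$ down. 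This is the step I expect to be the main obstacle: extracting a clean multiplicative-doubling statement from the $Q$-condition requires carefully choosing which linear combinations of the hypotheses to feed in, and handling the interaction between the additive structure ($Q$, $\Tr$, kernels as $\F_2$-subspaces) and the multiplicative structure (the product set, the inverse map).

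Once the doubling statement is in hand — concretely, that $1/K_1$ (equivalently $K_1\cup\{0\}$ after inversion, which has the same size) together with $1/K_2$ satisfies $|(1/K_1)\cdot(1/K_2)|$ bounded so tightly that in the case $K_1=K_2=:K$ one gets $|\tfrac1K\cdot\tfrac1K|=|K|$ — I would apply Theorem~\ref{thm:doubling} to the abelian group $(\F_{2^n}^*,\cdot)$: a set of size $|K|$ with doubling constant $1$ is a coset $gH$ of a subgroup $H\le\F_{2^n}^*$. The subgroups of $\F_{2^n}^*$ of $2$-power order are exactly $\F_{2^k}^*$ for $k\mid n$, so $1/K\setminus\{0\}=g\,\F_{2^k}^*$, which forces $K=a\F_{2^k}$ with $a=1/g$ (here one uses the remark in the excerpt that inversion sends $a\F_{2^k}$ to $\tfrac1a\F_{2^k}$, so the structure is symmetric under inversion and transfers back). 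For the general case $K_1\ne K_2$ I would first reduce to the symmetric case — plausibly by showing $d_1=d_2$ and that $K_1$ and $K_2$ are related by multiplication by a fixed element, using the identities $\ker L_1=L_2^*(\ker L_1^*)$ etc. from the first part — or handle both product sets simultaneously with a Plünnecke–Ruzsa-type bound before invoking Theorem~\ref{thm:doubling}. Finally, I would need to double back and verify that $\ker(L_i)$ itself (not just its inverse-image) is $a\F_{2^k}$, which is immediate from the inversion-symmetry remark, completing the proof; the subfield-translate conclusion is exactly what is needed to launch Step~2 of the outline.
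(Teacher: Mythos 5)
Your first step is sound and matches the paper in substance: polarizing the trace condition $\Tr(L_1^*(a)L_2^*(a))=0$ at a point $y\in\ker(L_1^*)$ gives $\Tr\bigl(aL_1(L_2^*(y))\bigr)=0$ for all $a$, hence $L_2^*(\ker(L_1^*))\subseteq\ker(L_1)$, and the dimension count via Lemma~\ref{lem:adjoint} together with $\ker(L_1^*)\cap\ker(L_2^*)=\{0\}$ upgrades this to equality. The endgame is also right (Theorem~\ref{thm:doubling}, then identifying the resulting subgroup with $\F_{2^k}^*$ because its order is $2^k-1$ --- note these subgroups have order $2^k-1$, not $2$-power order as you wrote).

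The genuine gap is the middle step, which you yourself flag as the main obstacle, and the directions you sketch for it would not close. What is needed is an \emph{exact} multiplicative closure statement for each kernel separately, namely $\sqrt{\ker(L_1)\cdot\ker(L_1)}=\ker(L_1)$: Theorem~\ref{thm:doubling} requires $|A\cdot A|=|A|$ on the nose, so a Pl\"unnecke--Ruzsa-type bound is a dead end, and there is no reason to reduce to $K_1=K_2$ or to look at mixed products $\tfrac{1}{K_1}\cdot\tfrac{1}{K_2}$ (inverses of kernels play no role in this theorem; they enter only later, in the hyperplane argument). The paper's mechanism is a discrete second difference of the quadratic condition along $\ker(L_1^*)$: writing $R=L_1^*(x)L_2^*(x)$ and expanding $Q(R(x+y))$ for $y\in\ker(L_1^*)$ with $B(x,y)=\Tr(xy)+\Tr(x)\Tr(y)$ yields $Q(L_1^*(x)L_2^*(y))+\Tr(R(x)L_1^*(x)L_2^*(y))=0$; differencing once more in a direction $z\in\ker(L_1^*)$ cancels both the $Q$-term and the cubic trace term and leaves the trilinear identity $\Tr\bigl((L_1^*(x))^2L_2^*(y)L_2^*(z)\bigr)=0$ for all $x\in\F_{2^n}$ and $y,z\in\ker(L_1^*)$. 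Read as orthogonality to $\im(L_1^*)$, this says $\sqrt{L_2^*(y)L_2^*(z)}\in\ker(L_1)$, i.e.\ $\sqrt{\ker(L_1)\cdot\ker(L_1)}\subseteq\ker(L_1)$ once $L_2^*(\ker(L_1^*))=\ker(L_1)$ is known, and equality follows by cardinality. Your alternative heuristic (``$Q$ additive-compatible on a subset forces it into an affine subspace'') is not the right statement: $B$ is not identically zero, and no affine-subspace structure is used anywhere; the second-difference identity is the bridge from the additive hypotheses to the multiplicative conclusion, and without it the proof does not go through.
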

\begin{proof}

We define $R(x)=L_1^*(x)L_2^*(x)$. Assume that $F$ permutes $\F_{2^n}$. Then Corollary~\ref{cor:faruk} implies $Q(R(x))=0$ and $\Tr(R(x))=0$ for all $x \in \F_{2^n}$. Let $x\in \F_{2^n}$ and $y \in \ker (L_1^*)$ (recall that we can assume $\ker(L_1^*) \neq \{0\}$ by Corollary~\ref{cor:chin-gen}). Then
\begin{align}
	0=&Q(R(x+y))\notag \\
	=&Q(R(x)+L_1^*(x)L_2^*(y)+L_1^*(y)L_2^*(x)+R(y)) \notag\\
	=&Q(R(x)+L_1^*(x)L_2^*(y)) \notag \\
	=&Q(R(x))+Q(L_1^*(x)L_2^*(y))+B(R(x),L_1^*(x)L_2^*(y)) \notag\\
	=&Q(L_1^*(x)L_2^*(y))+ \Tr(R(x)L_1^*(x)L_2^*(y)),\label{eq:1}
\end{align}
where we use $R(y)=L_1^*(y)=0$, $Q(R(x))=0$ and $\Tr(R(x))=0$ throughout the computation, as well as the bilinear form $B(x,y)=\Tr(xy)+\Tr(x)\Tr(y)$.

For every $z \in \ker(L_1^*)$, we then get using Eq.~\eqref{eq:1}
\begin{align*}
	0=&Q(L_1^*(x+z)L_2^*(y))+\Tr(R(x+z)L_1^*(x+z)L_2^*(y)) \\
	 =&Q(L_1^*(x)L_2^*(y))+\Tr(R(x+z)L_1^*(x)L_2^*(y))\\
	=&Q(L_1^*(x)L_2^*(y))+\Tr(L_1^*(x)L_2^*(x+z)L_1^*(x)L_2^*(y)) \\
	=&Q(L_1^*(x)L_2^*(y))+\Tr(R(x)L_1^*(x)L_2^*(y))\\
	&+\Tr((L_1^*(x))^2L_2^*(z)L_2^*(y)). 
\end{align*}
Adding the last equation to Eq.~\eqref{eq:1} yields
\begin{equation}\label{eq:2}
\Tr((L_1^*(x))^2L_2^*(z)L_2^*(y))=0
\end{equation}
 for all $y,z \in \ker(L_1^*)$ and $x \in \F_{2^n}$.

Setting $y=z$ in Eq.~\eqref{eq:2} we have $0=\Tr(L_1^*(x)L_2^*(y))=\Tr(xL_1(L_2^*(y)))$ for all $x \in \F_{2^n}$. Consequently, $L_2^*(\ker(L_1^*)) \subseteq \ker (L_1)$. Since $\ker(L_1^*)\cap \ker(L_2^*)=\{0\}$ by Corollary~\ref{cor:faruk} and $\dim (\ker (L_1))=\dim(\ker (L_1^*))$ by Lemma~\ref{lem:adjoint}, we get $\ker (L_1)=L_2^*(\ker(L_1^*))$. 

Again using Eq.~\eqref{eq:2}, we get 
\[\Tr\left(xL_1\left(\sqrt{L_2^*(z)L_2^*(y)}\right)\right)=0 \]
for all $y,z \in \ker(L_1^*)$ and $x \in \F_{2^n}$, so $\sqrt{L_2^*(\ker(L_1^*))\cdot L_2^*(\ker(L_1^*))} \subseteq \ker (L_1)$. Now since $L_2^*(\ker(L_1^*))=\ker (L_1)$ we have $\sqrt{\ker (L_1)\cdot \ker (L_1)}=\ker (L_1)$.

This particularly implies that 
\[|(\ker (L_1)\setminus\{0\})\cdot (\ker (L_1)\setminus\{0\})|=|(\ker (L_1)\setminus\{0\})|,\] so by Theorem~\ref{thm:doubling} we get $\ker (L_1)=aH \cup \{0\}$ where $a \in \F_{2^n}^*$ and $H\leq \F_{2^n}^*$ is a subgroup of the multiplicative group of $\F_{2^n}$. Since $|\ker (L_1)|=2^k$ for some $k\in \N$, we infer $|H|=2^k-1$. In particular, $H$ is the multiplicative group of a subfield of $\F_{2^n}$, so $\ker (L_1) = a\F_{2^k}$ for some $a \neq 0 $ and $k|n$. 

Since $F(x)=L_1(x^{-1})+L_2(x)$ is a permutation if and only if $F(x^{-1})=L_1(x)+L_2(x^{-1})$ is a permutation, we get the same results also for the kernel of $L_2$.

\end{proof}

Before we start with the second step of the proof, we need some lemmas. The first one is very well-known and can be found for instance in~\cite{LN}.
\begin{lemma} \label{lem:quad}
	The quadratic equation $ax^2+bx+c=0$ over $\F_{2^n}$ with $b \neq 0$ has solutions in $\F_{2^n}$ if and only if $\Tr(ac/b^2)=0$.
\end{lemma}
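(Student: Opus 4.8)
The plan is to normalize the equation by an invertible linear change of variable so that it takes the shape $y^2+y=\delta$, and then to exploit the well-understood structure of the additive map $y\mapsto y^2+y$ on $\F_{2^n}$. First I would dispose of the degenerate case $a=0$: there the equation is $bx+c=0$ with the unique solution $x=c/b$, and since $ac/b^2=0$ we have $\Tr(ac/b^2)=0$, so both sides of the claimed equivalence hold trivially. Hence from now on assume $a\neq 0$.

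Next, substitute $x=(b/a)y$; this is a bijection of $\F_{2^n}$ because $b\neq 0$. Then $ax^2=(b^2/a)y^2$ and $bx=(b^2/a)y$, so the equation becomes $(b^2/a)(y^2+y)+c=0$, and multiplying by the nonzero scalar $a/b^2$ turns it into $y^2+y=ac/b^2$. Therefore the original equation has a solution in $\F_{2^n}$ if and only if $ac/b^2$ lies in the image of the map $T\colon\F_{2^n}\to\F_{2^n}$, $T(y)=y^2+y$.

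Finally I would identify $\im(T)$. The map $T$ is $\F_2$-linear, and $T(y)=y(y+1)=0$ exactly when $y\in\{0,1\}$, so $\ker(T)=\F_2$ and $\dim_{\F_2}\im(T)=n-1$. On the other hand $\Tr(T(y))=\Tr(y^2)+\Tr(y)=0$ for every $y$, since $\Tr(y^2)=\Tr(y)$; thus $\im(T)$ is contained in the hyperplane $\{z:\Tr(z)=0\}$, which also has dimension $n-1$, forcing $\im(T)=\{z:\Tr(z)=0\}$. Combined with the reduction above, $ax^2+bx+c=0$ is solvable over $\F_{2^n}$ if and only if $\Tr(ac/b^2)=0$. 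There is no real obstacle in this argument; the only points deserving a word of care are the degenerate case $a=0$ and the dimension count showing that $T$ surjects onto the trace-zero hyperplane rather than onto some other codimension-one subspace.
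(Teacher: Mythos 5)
Your argument is correct and complete: the normalization $x=(b/a)y$ reducing to $y^2+y=ac/b^2$, together with the kernel--image dimension count identifying $\im(y\mapsto y^2+y)$ with the trace-zero hyperplane, is exactly the standard additive Hilbert~90 proof. The paper does not prove this lemma itself but cites it as well known from Lidl--Niederreiter, where the argument is essentially the one you give (your explicit handling of the degenerate case $a=0$ is a reasonable extra care, since the lemma as stated does not exclude it).
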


\begin{proposition} \label{prop:hyper}
	Let $F \colon \F_{2^n}\rightarrow \F_{2^n}$ be defined by $F(x)=L_1(x^{-1})+L_2(x)$, where $L_1,L_2$ are nonzero linear mappings over $\F_{2^n}$.
	$F$ is a permutation polynomial if and only if $F$ has only one zero and $L_2(a) \not\in L_1(\frac{1}{H_{1/a}})$ for all $a \in \F_{2^n}^*$.
\end{proposition}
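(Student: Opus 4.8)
The plan is to unwind what it means for $F(x)=L_1(x^{-1})+L_2(x)$ to be injective, using the fact (from Proposition~\ref{prop:inverse}) that a permutation of this form must have trivial intersection of kernels. First I would observe that $F$ is a permutation if and only if $F$ is injective, and that injectivity fails precisely when there exist distinct $u,v$ with $F(u)=F(v)$. I would split this into the case where one of $u,v$ is zero (which is exactly the condition that $F$ has a nonzero zero, so requiring ``$F$ has only one zero'' handles it) and the case where both $u,v$ are nonzero. In the latter case, writing the collision $L_1(u^{-1})+L_2(u)=L_1(v^{-1})+L_2(v)$ and rearranging gives $L_1(u^{-1}+v^{-1})=L_2(u+v)$. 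Setting $a=u+v\neq 0$, the element $u$ ranges over solutions of $u^2+au+uv=0$, equivalently $u(u+a)=uv$; the key identity is $u^{-1}+v^{-1}=\frac{u+v}{uv}=\frac{a}{uv}$, so $uv$ determines the left side.

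Next I would parametrize: for a fixed $a\in\F_{2^n}^*$, as $u$ ranges over $\F_{2^n}\setminus\{0,a\}$ and $v=u+a$, the product $uv=u(u+a)=u^2+au$ ranges over exactly the set $\{t : \Tr(t/a^2)=0,\ t\neq 0\}$ by Lemma~\ref{lem:quad} (the equation $u^2+au+t=0$ is solvable iff $\Tr(t/a^2)=0$, and $t=0$ corresponds to the excluded $u\in\{0,a\}$). Now $\Tr(t/a^2)=0$ is the same as $1/t$ lying in the hyperplane $H_{1/a^2}$... here I would have to be careful: I want to re-express $u^{-1}+v^{-1}=a/(uv)$ running over a set of the form $L_1$ applied to inverses of a hyperplane. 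Writing $s = a/(uv)$, so $s^{-1} = uv/a$, and the condition $\Tr(uv/a^2)=0$ becomes $\Tr(s^{-1}/a)=0$, i.e. $s^{-1}\in H_{1/a}$, i.e. $s\in 1/H_{1/a}$. Hence a nonzero collision with $u+v=a$ exists if and only if $L_2(a)\in L_1(1/H_{1/a})$. Therefore $F$ is \emph{not} injective (beyond the issue of multiple zeros) iff there is some $a\in\F_{2^n}^*$ with $L_2(a)\in L_1(1/H_{1/a})$, which is the contrapositive of the stated condition.

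The main obstacle I expect is bookkeeping the various edge cases cleanly: making sure the ``$u=0$ or $v=0$'' collisions are exactly captured by ``$F$ has only one zero'' (note $F(0)=0$ always, so another zero is a nonzero collision with $0$), and checking that the excluded values $u\in\{0,a\}$ correspond precisely to $uv=0$ so that the parametrization of $uv$ over the punctured hyperplane condition is exact, with no double-counting affecting the logical equivalence. One also has to confirm that when $L_1$ is not injective the statement ``$L_2(a)\in L_1(1/H_{1/a})$'' still correctly encodes solvability of $L_1(u^{-1}+v^{-1})=L_2(a)$ with $u^{-1}+v^{-1}$ forced into $1/H_{1/a}$ — this is fine since we only need existence of \emph{some} preimage, so membership in the image $L_1(1/H_{1/a})$ is exactly right. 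Once these cases are pinned down the equivalence is immediate, so I would spend most of the write-up on the two-sided set equality $\{u^{-1}+v^{-1} : u+v=a,\ u\neq 0,\ v\neq 0\} = 1/H_{1/a}$ and the clean handling of the zero.
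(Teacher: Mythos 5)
Your argument is correct and follows essentially the same route as the paper: both reduce the collision $F(x)=F(x+a)$ to the quadratic $u^2+au+t=0$ solved via Lemma~\ref{lem:quad}, identify the image of $x\mapsto x^{-1}+(x+a)^{-1}$ away from $\{0,a\}$ as $1/H_{1/a}$, and absorb the $x\in\{0,a\}$ collisions into the ``only one zero'' condition. The only difference is organizational (the paper computes the full image set $M_a=1/H_{1/a}\cup\{a^{-1}\}$ first and then peels off the zero condition, whereas you split off the zero case at the start), which is immaterial.
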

\begin{proof}
	$F$ permutes $\F_{2^n}$ if and only if $F(x)+F(x+a) \neq 0$ for all $x \in \F_{2^n}$ and $a \in \F_{2^n}^*$, i.e.
	\begin{equation} \label{eq:prop1}
		L_1(x^{-1}+(x+a)^{-1})\neq L_2(a).
	\end{equation}
	We determine the set $M_a = \{c \in \F_{2^n} \mid \exists x \in\F_{2^n} \colon  x^{-1}+(x+a)^{-1}=c\}$. We have
	\begin{equation*}
	x^{-1}+(x+a)^{-1}=c \iff c=a^{-1} \text{ or } \Tr(\frac{1}{ac})=0.
	\end{equation*}
	Indeed, $c=a^{-1}$ if we choose $x=0$ or $x=a$, and in the other cases we can multiply the equation by $x(x+a)$ which results in the quadratic equation
	\[cx^2+acx+a = 0.\]
	By Lemma~\ref{lem:quad}, this equation has a solution in $\F_{2^n}$ if and only if $c \neq 0$ and $\Tr(1/(ac))=0$. We conclude that $M_a=1/H_{\frac{1}{a}} \cup \{a^{-1}\}$, so Eq.~\eqref{eq:prop1} gives $L_2(a)\not\in L_1(M_a)$. Observe that $L_2(a)=L_1(a^{-1})$ if and only if $a$ is a zero of $F$ and the result follows.
\end{proof}

\begin{lemma} \label{lem:hyper}
	Let $a,b,c$ be three distinct elements in  $\F_{2^n}^*$. Then $H_{a} \cup H_{b} \cup H_{c} = \F_{2^n}$ if and only if $a+b = c$. In particular, if $M=r\F_{2^k}$ with $k|n$, $k>1$ and $r \in \F_{2^n}^*$, we can always find three elements $a,b,c\in M\setminus \{0\}$, such that $H_{1/a} \cup H_{1/b} \cup H_{1/c} = \F_{2^n}$.
\end{lemma}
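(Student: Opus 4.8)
### Proof proposal for Lemma \ref{lem:hyper}

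The plan is to handle the two assertions separately, starting with the equivalence $H_a \cup H_b \cup H_c = \F_{2^n} \iff a+b=c$ and then deriving the consequence about $M = r\F_{2^k}$. For the first direction, suppose $a+b=c$. Then for any $x \in \F_{2^n}$ we have $\Tr(ax)+\Tr(bx)+\Tr(cx) = \Tr((a+b+c)x) = 0$, so the three values $\Tr(ax),\Tr(bx),\Tr(cx) \in \F_2$ sum to zero; hence they cannot all equal $1$, i.e.\ at least one of them is $0$, which means $x$ lies in $H_a \cup H_b \cup H_c$. For the converse, I would argue by a counting/dimension argument: each $H_a$ is a hyperplane, so $|H_a| = |H_b| = |H_c| = 2^{n-1}$, and by inclusion–exclusion $|H_a \cup H_b \cup H_c| = 3\cdot 2^{n-1} - |H_a\cap H_b| - |H_a\cap H_c| - |H_b\cap H_c| + |H_a\cap H_b\cap H_c|$. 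Since $a,b,c$ are distinct and nonzero, any two of the corresponding hyperplanes are distinct, so each pairwise intersection has dimension $n-2$, i.e.\ size $2^{n-2}$. The triple intersection $H_a\cap H_b\cap H_c$ is the kernel of the map $x\mapsto(\Tr(ax),\Tr(bx),\Tr(cx))$; it has size $2^{n-2}$ if $a,b,c$ are linearly dependent over $\F_2$ (equivalently $a+b=c$, given distinctness and nonzeroness) and size $2^{n-3}$ otherwise. Plugging in: if $a+b=c$ the union has size $3\cdot 2^{n-1} - 3\cdot 2^{n-2} + 2^{n-2} = 2^{n-1}+2^{n-2}+\dots$ — I would just compute that it equals $2^n$ exactly when the triple intersection is the large one, and is strictly smaller ($7\cdot 2^{n-3} < 2^n$) otherwise. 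This pins down the equivalence.

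For the ``in particular'' statement, I need three elements $a,b,c \in M\setminus\{0\}$ with $\frac1a + \frac1b = \frac1c$, where $M = r\F_{2^k}$, $k>1$, $k\mid n$. Writing $a = r\alpha$, $b = r\beta$, $c = r\gamma$ with $\alpha,\beta,\gamma \in \F_{2^k}^*$, the condition $\frac1a+\frac1b = \frac1c$ becomes $\frac{1}{r}(\frac1\alpha + \frac1\beta) = \frac{1}{r}\cdot\frac1\gamma$, i.e.\ $\frac1\alpha + \frac1\beta = \frac1\gamma$, an equation entirely inside the subfield $\F_{2^k}$. Equivalently $\alpha + \beta = \frac{\alpha\beta}{\gamma}$, so I just need $\alpha,\beta \in \F_{2^k}^*$ with $\alpha+\beta \neq 0$ and $\alpha+\beta \neq$ (forbidden values), and then set $\gamma = \frac{\alpha\beta}{\alpha+\beta}$; the only things to check are $\gamma \neq 0$ (automatic) and that $\alpha,\beta,\gamma$ are pairwise distinct so that $a,b,c$ are genuinely distinct. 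Since $k \geq 2$, $\F_{2^k}$ has at least $4$ elements, giving plenty of room: e.g.\ pick $\alpha = 1$ and $\beta$ any element of $\F_{2^k}^*\setminus\{1\}$ with $\beta \neq$ the root of $\beta^2 = \beta+1$-type obstruction if any arises; a short case check (at most one or two bad choices of $\beta$, versus $\geq 3$ available) secures three distinct elements. Then $H_{1/a}\cup H_{1/b}\cup H_{1/c} = \F_{2^n}$ follows from the first part applied to $1/a, 1/b, 1/c$.

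The main obstacle I anticipate is not conceptual but bookkeeping: making sure in the ``in particular'' part that the three elements $\alpha,\beta,\gamma$ (hence $a,b,c$) are pairwise \emph{distinct} and all nonzero, since the first part of the lemma explicitly requires three distinct nonzero elements. One must rule out degenerate configurations like $\gamma = \alpha$ (which forces $\beta$ into a specific relation with $\alpha$) or $\gamma = \beta$. Because $k>1$ gives $|\F_{2^k}^*| \geq 3$, there is enough slack — the number of forbidden $\beta$ for a fixed $\alpha$ is a small constant — so a direct finite check settles it; for instance one can even verify explicitly in $\F_4 \subseteq \F_{2^k}$ that a valid triple exists and note that $\F_4 \subseteq \F_{2^k}$ whenever $2 \mid k$, while for odd $k>1$ one works with $\F_{2^k}$ itself which has $\geq 7$ nonzero elements. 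The rest is routine.
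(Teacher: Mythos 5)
Your proposal is correct and follows essentially the same route as the paper: an inclusion--exclusion count showing $|H_a\cup H_b\cup H_c|=2^{n-1}+2^{n-2}+|H_a\cap H_b\cap H_c|$, the identification of the triple intersection having size $2^{n-2}$ with $\F_2$-linear dependence (i.e.\ $a+b=c$), and for the subfield part the substitution $a=r\alpha$, $b=r\beta$ reducing $\tfrac1a+\tfrac1b=\tfrac1c$ to an equation inside $\F_{2^k}$. The distinctness issue you flag is in fact automatic, since $\gamma=\alpha$ or $\gamma=\beta$ would force $\tfrac1\beta=0$ or $\tfrac1\alpha=0$, so no case check is needed; your separate trace-sum argument for the ``if'' direction is a harmless (slightly more elementary) variant of what the counting already gives.
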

\begin{proof}
	Clearly, all hyperplanes have $2^{n-1}$ elements and all intersections of two distinct hyperplanes have $2^{n-2}$ elements, so
	\begin{align*}
		|H_{a} \cup H_{b} \cup H_{c} |=&|H_{a}|+|H_{b}|+|H_{c}|-|H_{a} \cap H_{b}|\\&-|H_{a} \cap H_{c}|-|H_{b} \cap H_{c}|\\&+|H_{a} \cap H_{b} \cap H_{c} | \\
		=&3\cdot 2^{n-1}-3\cdot 2^{n-2}+|H_{a} \cap H_{b} \cap H_{c} |\\
		=&2^{n-1}+2^{n-2}+|H_{a} \cap H_{b} \cap H_{c} |.
	\end{align*}
	Consequently, $H_{a} \cup H_{b} \cup H_{c}=\F_{2^n}$ if and only if $|H_{a} \cap H_{b} \cap H_{c} |=2^{n-2}$, which means $H_{a} \cap H_{b}  \subseteq H_{c}$. This is equivalent to $a+b = c$.
	
	Now let $M=r\F_{2^k}$ be a translate of a subfield of $\F_{2^n}$ with $k>1$. Choose two distinct elements $a=rs_1$, $b=rs_2$ with $s_1,s_2 \in \F_{2^k}^*$, then 
	\[\frac{1}{a}+\frac{1}{b} = \frac{1}{r}\left(\frac{1}{s_1}+\frac{1}{s_2}\right).\]
	Clearly, $1/s_1+1/s_2=1/s$ is an element in $\F_{2^k}^*$. For the three elements $a,b,rs \in M$ we have $\frac{1}{a}+\frac{1}{b} = \frac{1}{rs}$, so we have  $H_{1/a} \cup H_{1/b} \cup H_{1/rs} = \F_{2^n}$.
\end{proof}

\begin{theorem} \label{thm:kerdim1}
Let $F \colon \F_{2^n}\rightarrow \F_{2^n}$ be defined by $F(x)=L_1(x^{-1})+L_2(x)$ with $n \geq 5$, where $L_1,L_2$ are nonzero linear mappings over $\F_{2^n}$. If $F$ is a permutation, then $|\ker(L_1)|=|\ker(L_2)|=2$.
\end{theorem}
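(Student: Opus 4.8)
The plan is to combine the structural conclusion of Theorem~\ref{thm:subfield} with the hyperplane criterion of Proposition~\ref{prop:hyper}. Assume $F$ is a permutation. By Corollary~\ref{cor:chin-gen} neither $L_1$ nor $L_2$ is bijective, so both kernels are nontrivial, and by Theorem~\ref{thm:subfield} we may write $\ker(L_1)=a\F_{2^k}$ and $\ker(L_2)=b\F_{2^l}$ with $a,b\in\F_{2^n}^*$, $k,l\mid n$ and $k,l\geq 1$. What is left is to show $k=l=1$. Since $F(x)$ is a permutation if and only if $F(x^{-1})=L_1(x)+L_2(x^{-1})$ is a permutation, the roles of $L_1$ and $L_2$ are interchangeable, so it suffices to prove $l=1$, i.e. $|\ker(L_2)|=2$.

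To do this I would argue by contradiction, assuming $l\geq 2$. Then $\ker(L_2)=b\F_{2^l}$ is a translate of a subfield with $l>1$, so Lemma~\ref{lem:hyper} produces nonzero elements $c_1,c_2,c_3\in\ker(L_2)$ with $H_{1/c_1}\cup H_{1/c_2}\cup H_{1/c_3}=\F_{2^n}$. Now pick any nonzero $w\in\ker(L_1)$. Its inverse $1/w$ lies in one of these three hyperplanes, say $1/w\in H_{1/c_i}$; since $1/w\neq 0$ this means $w\in 1/H_{1/c_i}$, whence $0=L_1(w)\in L_1(1/H_{1/c_i})$. But $c_i\in\ker(L_2)$ gives $L_2(c_i)=0$ as well, so $L_2(c_i)\in L_1(1/H_{1/c_i})$, contradicting Proposition~\ref{prop:hyper}. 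Hence $l=1$, and applying the same argument to $L_1(x)+L_2(x^{-1})$ yields $k=1$, so $|\ker(L_1)|=|\ker(L_2)|=2$.

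I expect this step to be short, because the genuine difficulty has already been absorbed into Theorem~\ref{thm:subfield}: once both kernels are known to be translates of subfields, all that remains is the combinatorial observation that evaluating the permutation criterion of Proposition~\ref{prop:hyper} at a point of $\ker(L_2)$ reduces it to asking whether a bounded number of hyperplanes can avoid the inverted kernel $1/\ker(L_1)$, which Lemma~\ref{lem:hyper} forbids as soon as $\ker(L_2)$ has dimension at least $2$. (One can also bypass Lemma~\ref{lem:hyper} by noting that a hyperplane $H_{1/c}$ meets the $k$-dimensional subspace $\frac{1}{a}\F_{2^k}$ in dimension at least $k-1$, which is already positive once $k\geq 2$, producing the contradiction directly from $\ker(L_1)$.) So I do not anticipate a real obstacle; the only point that needs care is aiming the criterion at the right kernel and keeping straight which of $L_1,L_2$ sits inside versus outside in Proposition~\ref{prop:hyper}.
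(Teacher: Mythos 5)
Your proposal is correct and follows essentially the same route as the paper: reduce to $\ker(L_2)=b\F_{2^l}$ via Theorem~\ref{thm:subfield}, use Lemma~\ref{lem:hyper} to cover $\F_{2^n}$ by three hyperplanes indexed by inverses of kernel elements, and derive a contradiction with Proposition~\ref{prop:hyper} from the fact that $\ker(L_1)$ is nontrivial (the paper phrases this as $0\notin L_1(\F_{2^n}^*)$ rather than tracking a specific $w$, but it is the same argument). Your parenthetical observation that a single hyperplane already meets the subspace $\frac{1}{a}\F_{2^k}$ nontrivially when $k\geq 2$ is a valid minor shortcut, but does not change the substance.
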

\begin{proof}
	By Theorem~\ref{thm:subfield}, we know that $\ker (L_2) = a\F_{2^k}$ for some $a \in \F_{2^n}^*$ and $k|n$. We want to show that $k=1$. Assume to the contrary that $k\geq 2$. Then there exist three distinct elements $a,b,c \in \ker (L_2)$ such that $H_{1/a} \cup H_{1/b} \cup H_{1/c} = \F_{2^n}$ by Lemma~\ref{lem:hyper}. Equivalently, $\frac{1}{H_{1/a}} \cup\frac{1}{H_{1/b}}  \cup\frac{1}{H_{1/c}} = \F_{2^n}^*$. 
	
	By Proposition~\ref{prop:hyper}, we have $L_2(x) \not\in L_1(\frac{1}{H_{1/x}})$ for all $x \in \F_{2^n}^*$. For the elements $a,b,c$, this relation becomes 
	\begin{equation*}
		0 \not\in L_1(\frac{1}{H_{1/a}}) \cup L_1(\frac{1}{H_{1/b}}) \cup L_1(\frac{1}{H_{1/c}}) = L_1(\F_{2^n}^*).
	\end{equation*}
	But since $|\ker (L_1)|>1$ by Corollary~\ref{cor:chin-gen}, this is not possible. We conclude $k=1$ and $|\ker(L_2)| = 2$.
	
	Since $L_1(x^{-1})+L_2(x)$ is a permutation if and only if $L_1(x)+L_2(x^{-1})$ is a permutation, the argument works again symmetrically for the kernel of $L_1$.
\end{proof}

Basic linear algebra shows that for two linear mappings $L,L'$ with $\ker(L)=\ker(L')$, there exists always a bijective linear mapping $L''$ such that $L' = L'' \circ L$, so (using Theorem~\ref{thm:kerdim1}) we can assume without loss of generality that $L_1(x)=x^2+ax$ for some $a \neq 0$. In fact, we can even assume $a=1$ as the following argument shows:
\begin{align}
	x^{-2}+ax^{-1}+L_2(x)&=c \label{eq:eqsubst1}\\
	\iff a^{-2}x^{-2}+a^{-1}x^{-1}+a^{-2}L_2(x)&=a^{-2}c \nonumber
\end{align}
by multiplying the equation with $a^{-2}$. After a substitution $x \mapsto x/a$, we get
\begin{equation}
	x^{-2}+x^{-1}+a^{-2}L_2(x/a)=a^{-2}c.
\label{eq:eqsubst3}
\end{equation}

Eq.~\eqref{eq:eqsubst1} has one solution for every $c \in \F_{2^n}$ if and only if Eq.~\eqref{eq:eqsubst3} has one solution for each $c$. Observe that $a^{-2}L_2(x/a)$ is still a linear mapping, so we can consider without loss of generality $L_1(x)=x^2+x$. In fact, we will instead use $L_1(x)=(x^2+x)^{2^{n-1}}=x^{2^{n-1}}+x$, which is equivalent to the case $L_1(x)=x^2+x$. Indeed, if  $F(x)=x^{-2}+x^{-1}+L_2(x)$ is a permutation, then so is $F(x^{2^{n-1}})=x^{-1}+x^{-2^{n-1}}+L_2(x^{2^{n-1}})$. The reason for this transformation is that in this case $L_1^*(x)=x^2+x$, which makes the following technical calculations slightly easier. In this case we also have $\ker(L_1)=\ker(L_1^*)=\{0,1\}$. By Theorem~\ref{thm:subfield}, we know $\ker(L_1)=L_2^*(\ker(L_1^*))$, which implies $L_2^*(1)=1$. 

\begin{theorem} \label{thm:noperm}
	There is no permutation of the form $F(x)=L_1(x^{-1})+L_2(x)$ on $\F_{2^n}$ with nonzero linear mappings $L_1,L_2$ if $n \geq 5$.
\end{theorem}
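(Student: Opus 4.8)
The plan is to carry out the third step outlined after Theorem~\ref{thm:doubling}: we have reduced to the normalized case $L_1(x) = x^{2^{n-1}}+x$ (so $L_1^*(x) = x^2+x$, with $\ker L_1 = \ker L_1^* = \{0,1\}$), and by Theorem~\ref{thm:kerdim1} together with Theorem~\ref{thm:subfield} we know $\dim \ker L_2 = 1$ and $L_2^*(1) = 1$. Since $|\ker L_2| = 2$, there is a nonzero $v$ with $\ker L_2 = \{0,v\}$, and more usefully $\ker L_2^* = \{0,w\}$ for some $w \neq 0$, with $w \notin \{0,1\}$ because $\ker L_1^* \cap \ker L_2^* = \{0\}$. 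The goal is to derive a contradiction from the permutation criterion. I would run both the algebraic criterion of Proposition~\ref{prop:hyper} (no nonzero zero, and $L_2(a) \notin L_1(\tfrac{1}{H_{1/a}})$ for all $a \neq 0$) and, more decisively, the Kloosterman/quadratic-form criterion of Corollary~\ref{cor:faruk}: $\Tr(R(x)) = Q(R(x)) = 0$ for all $x$, where $R(x) = L_1^*(x)L_2^*(x) = (x^2+x)L_2^*(x)$.

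The heart of the argument should be to push the quadratic-form condition $Q((x^2+x)L_2^*(x)) = 0$ much further than in the proof of Theorem~\ref{thm:subfield}. Using $B(u,u') = \Tr(uu') + \Tr(u)\Tr(u')$ and the already-established fact that $\Tr(R(x)) = 0$ identically, polarizing $Q(R(x))$ in the direction of the one-dimensional spaces $\ker L_1^* = \{0,1\}$ and $\ker L_2^* = \{0,w\}$ should produce strong trace identities. Concretely, writing $L_2^*(x) = \ell(x)$ for brevity, I expect that evaluating $Q(R(x+1)) = Q(R(x))$, $Q(R(x+w))=Q(R(x))$, and combining with the bilinear expansion yields that certain explicit low-degree (in $x$) expressions built from $\ell$ have trace $0$ for all $x$; since a linearized polynomial whose trace vanishes identically must itself be the zero map on the relevant space, these should force severe constraints on $\ell$, ultimately pinning down $L_2^*$ (hence $L_2$) to a very short list of candidates — essentially $L_2^*(x) = x + (\text{something supported on } \ker L_1^*)$, i.e. $L_2$ itself close to a rank-$(n-1)$ map with a prescribed image. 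For each surviving candidate I would then either invoke Corollary~\ref{cor:chin-gen} (if $L_1$ or $L_2$ turns out bijective — impossible), or invoke Theorem~\ref{thm:chin}/its corollary, or directly exhibit a collision $F(x) = F(x+a)$ using Proposition~\ref{prop:hyper} by checking that $L_2(a) \in L_1(\tfrac{1}{H_{1/a}})$ for a well-chosen $a$ in the (now explicit) image of $L_2$.

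An alternative and possibly cleaner route for the final elimination is purely via Proposition~\ref{prop:hyper}: once $\ker L_1 = \{0,1\}$, the set $L_1(\tfrac{1}{H_{1/a}})$ is, up to the $2$-to-$1$ map $L_1$, the image of a half of $\F_{2^n}^\ast$; the condition $L_2(a) \notin L_1(\tfrac1{H_{1/a}})$ for every $a$ says $L_2$ dodges a large moving target. Counting: $|L_1(\tfrac1{H_{1/a}})| \ge 2^{n-2}-1$ typically, while $\im L_2$ is a hyperplane of size $2^{n-1}$, so the forbidden sets are forced to line up with $\im L_2$ in a very rigid way for every single $a$, which I expect to be impossible for $n \ge 5$ by a pigeonhole/averaging argument over $a$ ranging through a coset structure. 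I anticipate the main obstacle to be exactly this final case analysis: the quadratic-form polarization narrows things down but likely leaves a handful of genuinely different linear maps $L_2$ (parametrized by how $L_2^*$ acts modulo $\ker L_1^*$), and ruling out each one cleanly — without an unenlightening brute-force computation — will require choosing the collision element $a$ cleverly, probably taking $a$ or $1/a$ inside a subfield so that $H_{1/a}$ and its inverse have controllable structure, mirroring the subfield trick used in Lemma~\ref{lem:hyper}. Getting a uniform such choice that works for all $n \ge 5$ simultaneously (rather than residue classes of $n$ mod small numbers) is where the real care lies.
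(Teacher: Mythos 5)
Your setup (normalizing to $L_1^*(x)=x^2+x$ with $L_2^*(1)=1$, and polarizing the condition $Q(R(x))=0$ from Corollary~\ref{cor:faruk} along $\ker L_1^*$) matches the paper's starting point, but the proposal stops where the real work begins, and both places where you write ``I expect'' are where it goes wrong. First, the polarization does not leave ``a handful of genuinely different linear maps'' of the shape $x+(\text{something supported on }\ker L_1^*)$. Writing $L_2^*(x)=\sum_i c_ix^{2^i}$ and comparing coefficients of the monomials $x^{2^i+2^s}$ in the identities $\Tr((x^2+x)L_2^*(x))=0$ and $Q(x^2+x)+\Tr((x^4+x^2)L_2^*(x))=0$ (the latter being Eq.~\eqref{eq:1} with $y=1$) produces a three-term recursion that determines \emph{every} coefficient from $c_0$ alone: $c_i=c_0^{2^i}$ for $i$ odd and $c_i=c_0^{2^i}+1$ for even $i$ with $1\le i\le n-1$, and in addition forces $n$ to be odd. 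This explicit coefficient computation is unavoidable; the expressions involved are quadratic in $x$, so no soft argument of the form ``a linearized polynomial with identically vanishing trace is zero'' delivers the conclusion, and the surviving family is a one-parameter family quite unlike your guessed normal form.

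Second, and more seriously, your endgame does not close. The paper eliminates the surviving family by returning to the \emph{unpolarized} condition $Q((x^2+x)L_2^*(x))=0$ and checking that the coefficient of $x^8$ in this quartic form equals $1$ for every admissible $c_0$ --- an immediate contradiction once the $c_i$ are pinned down. Your proposed replacements would fail: Proposition~\ref{prop:hyper} only requires the single point $L_2(a)$ to avoid the set $L_1\bigl(\tfrac{1}{H_{1/a}}\bigr)$, which has density roughly $1/4$ in $\F_{2^n}$, and a point avoiding a set of density $1/4$ for each $a$ cannot be excluded by pigeonhole or averaging; the covering trick of Lemma~\ref{lem:hyper} needs three nonzero kernel elements whose inverse-hyperplanes cover $\F_{2^n}$, which is exactly what is unavailable once $|\ker L_2|=2$ (this is why the paper can use it to prove Theorem~\ref{thm:kerdim1} but not Theorem~\ref{thm:noperm} itself); and Corollary~\ref{cor:chin-gen} and Theorem~\ref{thm:chin} do not apply since neither $L_1$ nor $L_2$ is bijective here. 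So the decisive step --- that the quadratic-form condition, applied a second time to the fully determined candidate, is self-contradictory --- is absent from your proposal, and the substitute you sketch is not viable.
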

\begin{proof}
	Assume that $F$ is a permutation. By the considerations above we can assume without loss of generality that $L_1^*(x)=x^2+x$ and $L_2^*(1)=1$. We set $L_2^*(x)=\sum c_ix^{2^i}$ and derive necessary conditions on the coefficients $c_i$ and show that those conditions contradict each other. As the basis we use the conditions given in Corollary~\ref{cor:faruk} as well as Eq. \eqref{eq:1} for $y = 1$. We get
	\begin{align}
		\Tr((x^2+x)L_2^*(x))&=0 \label{eq:a1}\\
		Q((x^2+x)L_2^*(x))&=0 \label{eq:a2}\\
		Q(x^2+x)+\Tr((x^4+x^2)L_2^*(x))&=0\label{eq:a3}
	\end{align}
	for all $x \in \F_{2^n}$.
	
	We start by expanding condition~\eqref{eq:a1}. We have
	\begin{align*}
		&\Tr((x^2+x)\sum_{i=0}^{n-1} c_ix^{2^i}) \\
		&= \sum_{s=0}^{n-1}\sum_{i=0}^{n-1} c_i^{2^s}x^{2^{i+s}+2^{s+1}}+\sum_{s=0}^{n-1}\sum_{i=0}^{n-1} c_i^{2^s}x^{2^{i+s}+2^{s}}\\
		 &=\sum_{s=0}^{n-1}\sum_{i=0}^{n-1} c_{i-s}^{2^s}x^{2^{i}+2^{s+1}}+\sum_{s=0}^{n-1}\sum_{i=0}^{n-1} c_{i-s}^{2^s}x^{2^{i}+2^{s}} \\
		 &= \sum_{s=0}^{n-1}\sum_{i=0}^{n-1} (c_{i-s+1}^{2^{s-1}}+c_{i-s}^{2^{s}})x^{2^i+2^s},
	\end{align*}
	where we use a transformation $i \mapsto i-s$ in the second step and then a transformation $s \mapsto s-1$ in the left double sum in the last step. Here we view the indices of the coefficients $c_i$ modulo $n$.
	 By condition~\eqref{eq:a1} this polynomial is equal to the zero polynomial. When we check the coefficient of $x^4$ (achieved by setting $i=s=1$), we get 
	\begin{equation}
		c_1=c_0^2.
	\label{eq:cond1}
	\end{equation}
	Similarly, checking the coefficient of $x^{2^r+1}$ for $1 \leq r \leq n-1$ (achieved for $i=0$, $s=r$ and $i=r$, $s=0$) we get
	\begin{equation}
		c_{-r+1}^{2^{r-1}}+c_{-r}^{2^r}+c_{r+1}^{2^{-1}}+c_r=0
	\label{eq:cond2}
	\end{equation}
	for all $1 \leq r \leq n-1$. 
	
	We now do the same procedure for condition~\eqref{eq:a3}. We have
	\begin{align*}
		Q&(x^2+x)+\Tr((x^4+x^2)L_2^*(x)) \\
		=& Q(x^2)+Q(x)+\Tr(x^3)+\Tr(x)\\
		&+\Tr((x^4+x^2)\sum_{i=0}^{n-1} c_ix^{2^i}) \\
		=& \Tr(x^3)+\Tr(x)+\sum_{s=0}^{n-1}\sum_{i=0}^{n-1} c_i^{2^s}x^{2^{i+s}+2^{s+2}}\\&+\sum_{s=0}^{n-1}\sum_{i=0}^{n-1} c_i^{2^s}x^{2^{i+s}+2^{s+1}} \\
		=&\sum_{i=0}^{n-1}x^{2^{i+1}+2^i}+\sum_{i=0}^{n-1}x^{2^i}\\&+\sum_{s=0}^{n-1} \sum_{i=0}^{n-1}(c_{i-s+2}^{2^{s-2}}+c_{i-s+1}^{2^{s-1}})x^{2^i+2^s},
	\end{align*}	
	where we use $Q(x^2)=Q(x)$ and in the last two steps again the transformations $i \mapsto i-s$ and $s \mapsto s-2$ (in the left double sum) and $s \mapsto s-1$ (in the right double sum). Checking the coefficient of $x^8$ (achieved by $i=s=2$ in the double sum) we get
	\begin{equation}
		c_2=c_1^2+1.
		\label{eq:cond3}
	\end{equation}
	For the coefficients of $x^{2^r+1}$ for $1 \leq r \leq n-1$ (achieved by  $i=r$, $s=0$ and $i=0$, $s=r$), we similarly get
	\begin{equation}
	c_{r+2}^{2^{-2}}+c_{r+1}^{2^{-1}}+c_{-r+2}^{2^{r-2}}+c_{-r+1}^{2^{r-1}}=\begin{cases}
	1, & r \in \{1,n-1\} \\
	0, & r \in \{2,\dots,n-2\},
	\end{cases}
	\label{eq:3case}
	\end{equation}
	where the additional $1$ in the cases $r\in \{1,n-1\}$ is due to the $\Tr(x^3)$ term. Substituting $r \mapsto r-1$ and squaring the equation yields
	
	\begin{equation*}
	c_{r+1}^{2^{-1}}+c_{r}+c_{-r+3}^{2^{r-2}}+c_{-r+2}^{2^{r-1}}=\begin{cases}
	1, & r \in \{0,2\} \\
	0, & r \in \{3,\dots,n-1\}.
	\end{cases}
	\end{equation*}
	Adding this equation to Eq. \eqref{eq:cond2}, we get
		\begin{equation*}
	c_{-r+1}^{2^{r-1}}+c_{-r}^{2^r}+c_{-r+3}^{2^{r-2}}+c_{-r+2}^{2^{r-1}}=\begin{cases}
	1, & r =2 \\
	0, & r \in \{3,\dots,n-1\}.
	\end{cases}
	\end{equation*}
	We simplify the equation by substituting $r\mapsto -r$ and taking the resulting equation to the power $2^{r+2}$:
		\begin{equation}
	c_{r+3}+c_{r+2}^2+c_{r+1}^{2}+c_{r}^4=\begin{cases}
	1, & r =n-2 \\
	0, & r \in \{1,\dots,n-3\}.
	\end{cases}
	\label{eq:cond4}
	\end{equation}	
	We show by induction that the constraints we have obtained so far imply 
	\begin{equation}
		c_i=\begin{cases}
			c_0^{2^i}, & i \text{ odd } \\
			c_0^{2^i}+1, & i\text{ even }.
		\end{cases}
	\label{eq:ind}
	\end{equation}
	for all $i \in \{1,\dots,n-1\}$. The cases $i=1$ and $i=2$ are shown in Eq.~\eqref{eq:cond1} and \eqref{eq:cond3}. We verify the case $i=3$ by using Eq.~\eqref{eq:3case} with $r=1$:
	\begin{align*}
		0&=1+c_3^{2^{-2}}+c_2^{2^{-1}}+c_1^{2^{-1}}+c_0\\&=1+c_3^{2^{-2}}+c_0^2+1+c_0+c_0=c_3^{2^{-2}}+c_0^2,
	\end{align*}
	which immediately yields $c_3=c_0^8$ as claimed. We now proceed by induction: Assume that all coefficients up to $k\geq 3$ satisfy Eq.~\eqref{eq:ind}. Then by Eq. \eqref{eq:cond4} 
	\begin{equation*}
		c_{k+1}=c_k^2+c_{k-1}^2+c_{k-2}^4=(c_0^{2^{k+1}}+1)+c_0^{2^{k}}+(c_0^{2^{k}}+1)=c_0^{2^{k+1}}
	\end{equation*}
	if $k+1$ is odd and
	\begin{equation*}
		c_{k+1}=c_k^2+c_{k-1}^2+c_{k-2}^4=c_0^{2^{k+1}}+(c_0^{2^{k}}+1)+c_0^{2^{k}}=c_0^{2^{k+1}}+1
	\end{equation*}
	if $k+1$ is even, proving Eq.~\eqref{eq:ind}. 
	
	We compute $c_{n-1}$ in another way using Eq.~\eqref{eq:cond2} for $r=1$:
	\begin{equation} \label{eq:n-1}
		0=c_0+c_{n-1}^2+c_2^{2^{-1}}+c_1=c_{n-1}^2+c_0+c_0^2+1+c_0^2,
	\end{equation}
	so $c_{n-1}=1+c_0^{2^{n-1}}$. This immediately implies in connection with Eq.~\eqref{eq:ind} that $n$ must be odd. 
	
	The last step is to find a contradiction to the coefficients described in Eq.~\eqref{eq:ind}. For that, we use the condition from Eq. \eqref{eq:a2}. First, we expand the condition
	\begin{align*}
		Q&((x^2+x)\sum_{i=0}^{n-1} c_ix^{2^i}) =Q(\sum_{i=0}^{n-1} c_ix^{2^i+2}+\sum_{i=0}^{n-1} c_ix^{2^i+1})\\
		 =& \sum_{0\leq r<s\leq n-1} \left(\sum_{i=0}^{n-1} (c_{i-r+1}^{2^{r-1}}+c_{i-r}^{2^{r}})x^{2^i+2^{r}}\right)\\
		&\cdot\left(\sum_{j=0}^{n-1}(c_{j-s+1}^{2^{s-1}}+c_{j-s}^{2^{s}})x^{2^j+2^{s}}\right) \\
		=&\sum_{0\leq r<s\leq n-1}\sum_{i=0}^{n-1}\sum_{j=0}^{n-1} d_{i,j,r,s}x^{2^i+2^j+2^r+2^s}
	\end{align*}
	with $d_{i,j,r,s}=(c_{i-r+1}^{2^{r-1}}+c_{i-r}^{2^{r}})(c_{j-s+1}^{2^{s-1}}+c_{j-s}^{2^{s}})$. To satisfy condition~\eqref{eq:a2}, this polynomial must be equal to the zero polynomial. Using Eq.~\eqref{eq:ind} and Eq.~\eqref{eq:n-1}, we see that 
	\begin{equation*}
		d_{i,j,r,s}=\begin{cases} 0, & i=r \text{ or } j=s \\
			1, & \text{else}.
		\end{cases}
	\end{equation*}
	We check the coefficient of $x^8$ of the polynomial $Q((x^2+x)L_2^*(x))$: The possible choices for $i,j,r,s$ are: 
	\begin{enumerate}
		\item $i=j=0,r=1,s=2$ with $d_{i,j,r,s}=1$
		\item $i=r=0,s=1,j=2$ with $d_{i,j,r,s}=0$
		\item $i=r=0,j=1,s=2$ with $d_{i,j,r,s}=0$
		\item $j=r=0,i=1,s=2$ with $d_{i,j,r,s}=1$
		\item $j=r=0,s=1,i=2$ with $d_{i,j,r,s}=1$.
	\end{enumerate}
	In particular, the coefficient of $x^8$ is the sum of the listed values of $d_{i,j,r,s}$, which is $1$, so $Q((x^2+x)L_2(x))$ is not the zero polynomial. This contradicts condition~\eqref{eq:a2} and proves the theorem.

\end{proof}

\begin{remark}
	The condition $n\geq 5$ in Theorem~\ref{thm:noperm} is necessary. Indeed, it is possible to find permutation polynomials of the form $L_1(x^{-1})+L_2(x)$ over $\F_{2^4}$ and $\F_{2^3}$ using a simple computer search, examples with $L_1(x)=x$ can be found in \cite{eainverse}.
\end{remark}

Our main result is a direct consequence from Theorem~\ref{thm:noperm} and Proposition~\ref{prop:start} (recall that the inverse function is an involution).

\begin{theorem}[Main result] \label{thm:mainresult}
	Let $F \colon \F_{2^n} \rightarrow \F_{2^n}$ be the inverse function with $n\geq 5$. The CCZ-class of $F$ coincides with the EA-class of $F$. Moreover, all permutations in the CCZ-class of $F$ are affine equivalent to $F$.
\end{theorem}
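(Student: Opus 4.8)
The plan is to assemble results already established rather than to prove anything new: the Main result is a formal consequence of Proposition~\ref{prop:start}(b) applied to the inverse function $F\colon x\mapsto x^{-1}$. The hypothesis of Proposition~\ref{prop:start}(b) asks that no permutation of the form $L_1(F(x))+L_2(x)=L_1(x^{-1})+L_2(x)$ with $L_1,L_2$ nonzero linear exists, and for $n\geq 5$ this is exactly Theorem~\ref{thm:noperm}. So the whole proof reduces to specializing the two conclusions of Proposition~\ref{prop:start}(b) to this particular $F$.

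First I would record that $F$ is an involution: under the convention $0^{-1}=0$ one has $(x^{-1})^{-1}=x$ for every $x\in\F_{2^n}$, hence $F^{-1}=F$. Consequently the clause ``EA-equivalent to $F$ or $F^{-1}$'' in Proposition~\ref{prop:start}(b) becomes simply ``EA-equivalent to $F$'', which shows that the CCZ-class of $F$ is contained in its EA-class. The opposite inclusion is automatic, since EA-equivalent functions are always CCZ-equivalent (as noted right after Definition~\ref{def:equiv}); the two classes therefore coincide. In the same way, the ``Moreover'' part of Proposition~\ref{prop:start}(b), combined with $F^{-1}=F$, yields that every permutation CCZ-equivalent to $F$ is affine equivalent to $F$, which is the second assertion.

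Since all of the analytic content lives in Section~II — Theorem~\ref{thm:noperm} itself, which in turn rests on Theorem~\ref{thm:subfield}, Theorem~\ref{thm:kerdim1}, and the explicit coefficient argument — there is no genuine obstacle remaining at this stage; the Main result is a clean corollary. The only point worth double-checking is that Theorem~\ref{thm:noperm} is proved in the full generality demanded by Proposition~\ref{prop:start}(b), namely for both $L_1$ and $L_2$ allowed to be arbitrary nonzero linear maps, and not merely in the special case $L_1=\Id$ treated earlier in Theorem~\ref{thm:chin}; this is indeed so, hence no gap arises. The hypothesis $n\geq 5$ in the Main result is inherited verbatim from Theorem~\ref{thm:noperm}, and the sporadic permutations in dimensions $3$ and $4$ mentioned in the preceding remark are precisely what prevents the statement from extending below that bound.
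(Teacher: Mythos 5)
Your proposal is correct and matches the paper's own argument exactly: the paper derives Theorem~\ref{thm:mainresult} as a direct consequence of Theorem~\ref{thm:noperm} and Proposition~\ref{prop:start}(b), noting that the inverse function is an involution so that the alternative ``EA-equivalent to $F^{-1}$'' collapses onto $F$ itself. Your additional remarks about the reverse inclusion and the generality of Theorem~\ref{thm:noperm} are sound and only make the deduction more explicit.
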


\section{Conclusion and possible directions of future research}
We have shown that no permutation polynomials of the form $L_1(x^{-1})+L_2(x)$ in $\F_{2^n}$ for $n\geq 5$ exist. This implies that every function that is CCZ-equivalent to the inverse function is already EA equivalent to it, and that all permutations CCZ-equivalent to the inverse function are affine equivalent to it. An interesting avenue of further research is to consider the same questions for other functions with good cryptographic properties (nonlinearity/differential uniformity). In particular, Theorem~\ref{thm:mainresult} proves Conjecture~\ref{conj:bud} for the case of the inverse function. However, all other cases have not been answered yet (to our knowledge). Using the approach in this paper, a possible way to prove the conjecture would be to prove the non-existence of permutation polynomials of the form $L_1(x^d)+L_2(x)$ with $L_1,L_2\neq 0$ where $x^d$ is a non-Gold APN monomial. Note however, that the non-existence of such a polynomial is a stronger statement than the statement in Conjecture~\ref{conj:bud}, i.e. finding a permutation polynomial of the form $L_1(x^d)+L_2(x)$ with nonzero $L_1,L_2$ does not disprove Conjecture~\ref{conj:bud}.

More generally, an interesting way to expand on the results in this paper would be to work towards a classification of permutation polynomials of the form $L_1(x^d)+L_2(x)$ (or even just $x^d+L(x)$) over $\F_{2^n}$ where $L,L_1,L_2$ are linear mappings. Combined with Proposition~\ref{prop:start}, this might give insight into the CCZ-classes of monomials. Note that this family of permutation polynomials is similar to other families that have been investigated in the past, for instance the polynomials $x^s+\gamma \Tr(x^t)$ considered in \cite{pascalegohar}.

It would also be of interest to look at the same problem in odd characteristic. As shown in~\cite{oddchar}, there is no permutation polynomial of the form $L_1(x^{-1})+L_2(x)$ in characteristic $\geq 5$ which is a straightforward observation based on the fact that there is no Kloosterman zero in finite fields of characteristic $ \geq 5$. The characteristic $3$ case is however still open. 

\section*{Acknowledgment}
I would like to thank Gohar Kyureghyan and Faruk G\"olo\u{g}lu for many discussions, hints and encouragement throughout my work on this paper. I would also like to thank the anonymous referees for their comments that improved the presentation of this paper.

\bibliographystyle{IEEEtranS}

\bibliography{IEEEabrv,inverse}

\begin{IEEEbiographynophoto}{Lukas K\"olsch}
	received the M.Sc. degree from Otto von Guericke University, Magdeburg, in 2017, and the Ph.D. degree from University of Rostock in 2020. He is currently with University of Rostock, Germany. His research interests include cryptography, Boolean functions, and finite fields.
\end{IEEEbiographynophoto}
\end{document}